\newcolumntype{H}{>{\setbox0=\hbox\bgroup}c<{\egroup}@{}}
\newcommand{\blind}{0}
\newtheorem{theorem}{Theorem}
\newtheorem{lemma}{Lemma}
\begin{document}

\def\spacingset#1{\renewcommand{\baselinestretch}%
{#1}\small\normalsize} \spacingset{1}

\def\spacingset#1{\renewcommand{\baselinestretch}%
{#1}\small\normalsize} \spacingset{1}


\if0\blind
{
  \title{\bf Fused LASSO as Non-Crossing Quantile Regression\thanks{The authors thank Atanas Christev, Rod McCrorie, Paul Allanson, Isaiah Andrews, Istv\'an J\'ar\'asi, Katalin Varga, David Kohns, and all the participants of the 2022 and 2023 PhD conference in Crieff for their feedback. Tibor Szendrei thanks the ESRC for PhD studentship as well as Heriot-Watt University for institutional support. The usual disclaimer applies.}}
  \author{Tibor Szendrei \footnote{Corresponding author: t.szendrei@niesr.ac.uk}\\
    Department of Economics, Heriot-Watt University, UK. \\
    National Institute of Economic and Social Research, UK.\\
     \\
    Arnab Bhattacharjee\\
    Department of Economics, Heriot-Watt University, UK.\\
    National Institute of Economic and Social Research, UK.\\
    \\
    Mark E. Schaffer\\
    Department of Economics, Heriot-Watt University, UK.}
  \maketitle
} \fi

\if1\blind
{
  \bigskip
  \bigskip
  \bigskip
  \begin{center}
    {\LARGE\bf Fused LASSO as Non-Crossing Quantile Regression}
\end{center}
  \medskip
} \fi


\bigskip
\begin{abstract}
\noindent Growth-at-Risk is vital for empirical macroeconomics but is often suspect to quantile crossing due to data limitations. While existing literature addresses this through post-processing of the fitted quantiles, these methods do not correct  the estimated coefficients. We advocate for imposing non-crossing constraints during estimation and demonstrate their equivalence to fused LASSO with quantile-specific shrinkage parameters. By re-examining Growth-at-Risk through an interquantile shrinkage lens, we achieve improved left-tail forecasts and better identification of variables that drive quantile variation. We show that these improvements have ramifications for policy tools such as Expected Shortfall and Quantile Local Projections.
\end{abstract}


\noindent%
{\it Keywords:} Interquantile shrinkage, Crossing Quantile Curves, High-Dimensional Econometrics, Growth-at-Risk. \\
\noindent
\vfill

\spacingset{1.45} 


\section{Introduction}

Growth-at-Risk (GaR) has become a key measure of economic vulnerability since the work of \citet{adrian2019vulnerable}. The key idea is to think about GDP growth through the lens of value-at-risk, and use quantile regression of \citet{koenker1978regression} to uncover nonlineary macro-financial linkages. The need for GaR was highlighted by the global financial crisis which showed how downside risks, or lower quantiles of the density of GDP growth, evolve with the state of financial market. 

GaR has become a popular tool for policy makers and researchers alike. \citet{figueres2020vulnerable} apply the method to euro area data and show that the macro-financial linkages driving skewness are not specific to the US. \citet{kohns23hsbqr} use shrinkage methods to fit GDP densities with over 200 covariates and find that accounting for multiple sources of risk leads to better density fits of GDP growth. \citet{iseringhausen2023aggregate} use the same dataset and look at the skewness of the fitted quantile estimates, which encompasses the aggregate impact of the considered covariates. They show that aggregate skewness is highly pro-cyclical. \citet{mitchell2022constructing} use quantiles estimated on US GaR and show that multimodality in the fitted densities is just as important as skewness.\footnote{\citet{adrian2021multimodality} and \citet{kohns23hsbqr} also discuss the importance of multimodality stemming from macro-financial linkages}. Finally, \citet{carriero2025specification} has looked at specifications for quantile regression for empirical macroeconomics, with GaR being one of the models being investigated.

While GaR is undoubtedly an important policy tool, it is a macroeconometric approach which in turn has data constraints. This data scarcity is particularly a problem for tail estimation \citep{koenker2005}, which is the main goal for a GaR model. One big concern when applying quantile regression in data scarce settings is the notion of quantile crossing, i.e. the situation where the fitted quantiles are not monotonically increasing. This yields improper densities and the literature has tackled this problem using one of two methods. One approach is to fit some distribution one the estimated quantiles in each time period, as done in \citet{adrian2019vulnerable} or \citet{korobilis2017quantile}. Another approach is to simply sort the estimated quantiles in each period as advocated by \citet{chernozhukov2009improving} and \citet{chernozhukov2010crossing}.\footnote{There is also the novel method of \citet{mitchell2022constructing} which proposes a nonparametric method to construct densities from sorted quantiles.} 

All these approaches that have been popular in the macroeconometric literature involve some form of post-processing the estimated quantiles. While these two-step methods yield proper densities, simply correcting the fitted quantiles does not quantify corresponding changes in the estimated coefficients, i.e., the coefficients estimated in the first step remain uncorrected. This is particularly a problem as quantile coefficients are used to construct quantile IRFS \citep{chavleishvili2024forecasting}, or quantile local projections \citep{ruzicka2021quantile}. As such, for macroeconometric inference it would be desirable to impose non-crossing during estimation of the quantiles. This is the motivation in \citet{szendrei2023revisiting}, who use the non-crossing constraints of \citet{bondell2010noncrossing} to identify variables that drive macro-financial nonlinearity in the Euro Area GaR. \citet{szendrei2023revisiting} also find that imposing non-crossing constraints also improves the forecasted densities of GaR.

While non-crossing constraints are alluring for empirical macroeconomic applications of quantile regression, we do not know how these constraints influence the estimated coefficients. The primary motivation of this paper is to study the impact of such constraints on the model parameters. This is achieved by implementing a set of non-crossing constraints that can be scaled, which in turn makes the non-crossing constraints tighter or looser. Using this adaptive non-crossing constraint we show that non-crossing constraints are equivalent to fused LASSO with quantile specific shrinkage parameters. We propose an estimator using these adaptive non-crossing constraints (Generalised Non-crossing Constraint Quantile Regression), where the hyperparameter regulating interquantile shrinkage is obtained by cross-validation.

We conduct comprehensive Monte Carlo exercises based on those used in the original paper by \citet{bondell2010noncrossing}, that first proposed non-crossing quantile regression. Through these experiments we show that the proposed estimator is capable of providing further improvements in fit compared to the original BRW estimator. We also consider the variable selection properties of the different estimators and verify that non-crossing constraints are a type of fused shrinkage. The Monte Carlos also reveal how the BRW estimator undershrinks quantile variation while the traditional Fused LASSO, as outlined in \citet{jiang2013interquantile}, overshrinks.

We then estimate US GaR using the variables outlined in \citet{adrian2019vulnerable}. We will investigate the canonical GaR with a interquantile shrinkage framework, i.e. identifying variables that drive nonlinearities. \citet{adrian2019vulnerable} alluded to this in their paper, but the estimator they use did not enforce limiting quantile variation during estimation, which influences their downside risk measures. In this GaR exercise our proposed estimator fares much better at distinguishing quantile varying variables than the traditional Fused LASSO. Furthermore, in a small pseudo forecasting exercise we find that the estimator with adaptive non-crossing constraints yields the best left tail forecasts. These findings add to \citet{carriero2025specification}, namely that fused shrinkage is also important for GaR. Our results also indicate that it is also important how this shrinkage is induced: adaptive non-crossing constraints yield improvements, while the traditional fused LASSO seem to lead to worse forecasting performance.

These differences in GaR coefficients have an influence on tools used in policy. We show how the choice of imposing interquantile shrinkage, and how to impose it, influences the estimated Expected Shortfall of GDP growth. In particular, Expected Shortfall based on traditional quantile regression is more volatile from one period to the next, while the proposed estimator produces smoother expected shortfall values over time, while still having values during periods of financial stress. We also produce quantile local projections of how GDP responds over time to a unit shock in the financial stress measure. Just like with the expected shortfall values, the LPs of the proposed estimator are smoother than those of the quantile regression LPs. These examples show the policy relevance of implementing interquantile shrinkage in GaR.

This paper is structured as follows. In Section~2 we introduce the quantile regression of \citet{koenker1978regression} along with the non-crossing constraint of \citet{bondell2010noncrossing}, before providing the adaptive non-crossing constraints that vary with a hyperparameter $\alpha$. Using these new constraints, we show that one can rewrite them into the Fused LASSO constraint of \citet{jiang2013interquantile}. The section concludes by describing how the hyperparameter is chosen. Section~3 describes the Monte Carlo experiment before presenting the fit and variable selection performance of the different estimators. This is followed by the US Growth-at-Risk application in Section~4. The paper concludes with the key takeaways of the method in Section~5.

\section{Methodology}
\subsection{Quantile Regression}
The first building block of the proposed methodology is the quantile regression framework of \citet{koenker1978regression}. Quantile regression is a weighted version of the least absolute deviation regression, and yields lines of best fit that explain different parts of the distribution. The collection of $Q$ estimated quantiles can be used to describe the distribution of a response variable $Y$ conditional on a vector of response variables $X \in \mathbb{R}^{K}$. Formally, the $q^{th}$ quantile is modelled in regression setting as:

\begin{equation} \nonumber
    \mathcal{Q}_{\tau_q}=X^T\beta_{\tau_q}.
\end{equation}

The collection of the Q quantiles parameters $\beta=\{\beta_{\tau_1},\beta_{\tau_2},...,\beta_{\tau_Q} \}$ describe the conditional distribution. The goal is to estimate the vector of coefficients $\beta_{\tau_q} \in \mathbb{R}^{K}$ for all quantiles.\footnote{Note that with this notation $X$ includes a constant or intercept.}  This can be done using quantile regression:

\begin{equation}\label{eq:QR3}
\begin{split}
    \hat{\beta}&=\underset{\beta}{argmin}\sum^{Q}_{q=1}\sum^{T}_{t=1}\rho_{\tau_q}(y_t-x_t^T\beta_{\tau_q})\\
    \rho_{\tau_q}(u)&=u({\tau_q}-I(u<0))
\end{split}
\end{equation}

\noindent where the second equation is the `tick-loss' function \citep{koenker1978regression}.

Quantile regression is not the only asymmetric estimator which uses the `tick-loss' function as a way to model different parts of the distribution. \citet{newey1987asymmetric} introduced the concept of expectile regression, i.e., the `tick-loss' weight applied to the $\ell_2$ norm of the residuals. The advantage of expectile regression is that the objective function is differentiable, unlike quantile regression's. Furthermore, when setting $\tau=0.5$ in an expectile regression, one recovers the OLS estimator, i.e., the conditional mean.

Expectile regression does come with some disadvantages. First, interpretation: while conditional quantiles can be interpreted as splitting the data into $\tau$ and $1-\tau$ parts, i.e. the $\tau^{th}$ conditional quantile of the sample, the same is not true for expectiles. Instead expectiles are the $\tau^{th}$ quantile of some distribution which is related to the original distribution of $Y$ as shown in \citet{jones1994expectiles}. Second, since expectile regression is based on the $\ell_2$ norm of the residuals, outliers in the dependent variable impacts all expectiles, while outliers only exert undue influence on extreme quantiles in the case of quantile regression. As such, ``fan-shaped'' estimated quantiles are a clear indication of heteroskedasticity, while the same shape for expectiles might simply be due to outliers in the dependent variable \citep{newey1987asymmetric, sobotka2012geoadditive}. On account of these disadvantages, in this paper we will focus on quantile regression and leave for future research to explore how our findings translate to expectiles.

Equation (\ref{eq:QR3}) gives an estimate for the parameters with which a description of the conditional distribution is obtained, but it is possible that these fitted quantiles cross. Quantile crossing is a violation of monotonocity assumption and often occurs on account of data scarcity or model misspecification \citep{koenker2005}. Limits in data availability are frequently encounted in practice, particularly in time-series settings. For forecasting applications, the two main methods for addressing quantile crossing are (1) use the fitted quantiles to fit some distribution for each time period as in \citet{adrian2019vulnerable} or \citet{korobilis2017quantile}; or (2) sort the estimated quantiles in each period as proposed by \citet{chernozhukov2009improving} and \citet{chernozhukov2010crossing}. While these two-step methods yield proper densities, correcting the fitted quantiles does not quantify corresponding changes in the estimated coefficients, i.e., the coefficients estimated in the first step remain uncorrected. This prompted \citet{bondell2010noncrossing} to propose an estimator which yields no crossing for the estimated quantiles in-sample. We note that there are other means to estimate non-crossing quantiles like \citet{liu2009stepwise} who estimate the median first, and sequentially estimates further quantiles, conditional on the previously estimated quantiles not being crossed. While this method will yield non-crossing quantiles, the choice of the first estimated quantile can be arbitary. As such, in this paper we will focus exclusively on non-crossing constraints as done in \citet{bondell2010noncrossing} since elements of their constraints have been carried over to other quantile estimators (see \citet{yang2017joint} for example).


\subsection{Non-Crossing Constraints}

Non-crossing constraints incorporated into quantile regression are a way to ensure that the estimated quantiles remain monotonically increasing. Imposing them can be done via inequality constraints:

\begin{equation} \label{eq:NCQR}
    \begin{split}
        \hat{\beta}&=\underset{\beta}{argmin}\sum^{Q}_{q=1}\sum^{T}_{t=1}\rho_q(y_{t}-x_t^T\beta_{\tau_q})\\
        &s.t.~x^T\beta_{\tau_q} \geq x^T\beta_{\tau_{q-1}}
    \end{split}
\end{equation}

While conceptually simple, the number of constraints in equation (\ref{eq:NCQR}) can be rather large on account of there being $T \times(Q-1)$ inequality constraints. To address this, \citet{bondell2010noncrossing} restrict the domain of the covariates to $\mathcal{D}\in [0,1]^K$ and focus on the worst case scenario in the data,\footnote{The situation where the negative difference coefficient's ($\gamma^-_{j,\tau_q}$) corresponding variables values are 1 and the positive difference coefficients ($\gamma^+_{j,\tau_q}$) corresponding variables equal 0.} which reduces the number of constraints to $(Q-1)$.\footnote{Any domain of interest which has an affine transformation that maps to $\mathcal{D}\in [0,1]^K$ works.} This simplifies computation a great deal and enables non-crossing constraints to be included without too much additional computational cost. Because quantile regression is invariant to monotone transformations, any affine invertible transformation can be applied: to obtain the coefficients pertaining to the un-transformed data, it is enough to apply the inverse transformation to the estimated coefficients \citep{koenker2005}.

The method of \citet{bondell2010noncrossing} recasts the parameters in terms of quantile differences: $(\gamma_{0,\tau_1},...,\gamma_{K,\tau_1})^T=\beta_{\tau_1}$ and $(\gamma_{0,\tau_q},...,\gamma_{K,\tau_q})^T=\beta_{\tau_q}-\beta_{\tau_{q-1}}$ for $q=2,...,Q$. With this quantile difference reparametrisation, the constraint in equation (\ref{eq:NCQR}) becomes $x^T\gamma_{\tau_q}\geq 0$. Note that we can without further assumptions decompose the $j^{th}$ difference as $\gamma_{j,\tau_q}=\gamma^+_{j,\tau_q}-\gamma^-_{j,\tau_q}$, where $\gamma^+_{j,\tau_q}$ is its positive and $-\gamma^-_{j,\tau_q}$ its negative part. For each $\gamma_{j,\tau_q}$ both parts are non-negative and only one part is allowed to be non-zero. With this reparameterisation, along with the restriction of the data to $\mathcal{D}\in [0,1]^K$, the non-crossing constraint can be redefined as:


\begin{equation}
    \gamma_{0,\tau_{q}}\geq \sum^K_{j=1}\gamma^-_{j,\tau_q} ~ (q=2,...,Q) \label{eq:constraint}
\end{equation}

A non-crossing constraint is therefore equivalent to ensuring that the sum of negative shifts do not push the quantile below the change in intercept, which acts as a pure location shifter. \citet{bondell2010noncrossing} shows that (\ref{eq:constraint}) is a necessary as well as a sufficient condition for non-crossing quantiles.


\subsection{Adaptive Non-Crossing Constraints}

Throughout the paper we follow the framework and assumptions laid out in \citet{bondell2010noncrossing}. To derive adaptive non-crossing constraints we first need to recast the non-crossing constraints of equation (\ref{eq:NCQR}) in a way that does not restrict the domain of interest to $\mathcal{D}\in [0,1]^K$. This is provided by Lemma \ref{theorem:generalconstr} below.

\begin{lemma} \label{theorem:generalconstr}
Given that a non-crossing constraint looks at the sum of positive shifters to be larger than the sum of negative shifters, in the worst case scenario, these constraints can be reformulated as:

\begin{equation} \label{eq:constraintEXPAND}
    \gamma_{0,\tau_{q}}+\sum^K_{j=1} min(X_j)\gamma_{j,\tau_q}^+\geq \sum^K_{j=1} max(X_j)\gamma_{j,\tau_q}^-
\end{equation}

\noindent where $X_j$ is the $j^{th}$ variable in the design matrix. 
\end{lemma}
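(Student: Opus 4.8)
The plan is to exploit the fact that the non-crossing requirement in difference form, $x^T\gamma_{\tau_q}\geq 0$ for every $x$ in the domain of interest and every $q=2,\dots,Q$, is a single affine inequality in $x$ that must hold uniformly over the domain. Writing the intercept explicitly through the constant $X_0\equiv 1$, this reads $\gamma_{0,\tau_q}+\sum_{j=1}^K x_j\gamma_{j,\tau_q}\geq 0$. The key observation is that quantifying over all admissible $x$ is equivalent to checking a single worst case: the configuration of $x$ that \emph{minimises} the left-hand side. If that minimum is nonnegative the inequality holds everywhere, and conversely the uniform requirement forces the minimiser to be nonnegative. This is precisely the move that collapses the infinite family of pointwise constraints to one.

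First I would take the domain of interest to be the bounding box $\prod_{j=1}^K[\min(X_j),\max(X_j)]$ determined by the data. Because the objective $\gamma_{0,\tau_q}+\sum_j x_j\gamma_{j,\tau_q}$ is affine and separable in the coordinates $x_j$, its minimum over the box is attained at a vertex, each coordinate being chosen according to the sign of its coefficient: a coordinate multiplying a positive coefficient is pushed to $\min(X_j)$, one multiplying a negative coefficient is pushed to $\max(X_j)$. Substituting the decomposition $\gamma_{j,\tau_q}=\gamma^+_{j,\tau_q}-\gamma^-_{j,\tau_q}$, the minimising contribution of the $j$th term becomes $\min(X_j)\gamma^+_{j,\tau_q}-\max(X_j)\gamma^-_{j,\tau_q}$, which matches the footnoted ``worst case'' (negative-part variables at their maximum, positive-part variables at their minimum).

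The step that requires care is combining the two sign cases cleanly into one summand, and here I would invoke the complementarity built into the reparameterisation, namely that for each $j$ at most one of $\gamma^+_{j,\tau_q}$, $\gamma^-_{j,\tau_q}$ is nonzero. Without it, the per-coordinate minimiser would depend on the sign of the \emph{net} coefficient $\gamma^+_{j,\tau_q}-\gamma^-_{j,\tau_q}$ rather than on the two parts independently, and the expression $\min(X_j)\gamma^+_{j,\tau_q}-\max(X_j)\gamma^-_{j,\tau_q}$ would be strictly more conservative than the true minimum; complementarity removes this ambiguity and makes the aggregated formula exact. Collecting terms, the minimised left-hand side equals $\gamma_{0,\tau_q}+\sum_j\min(X_j)\gamma^+_{j,\tau_q}-\sum_j\max(X_j)\gamma^-_{j,\tau_q}$, and requiring it to be nonnegative rearranges at once to the claimed inequality~(\ref{eq:constraintEXPAND}).

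Finally I would verify consistency with the known result by specialising to $\mathcal{D}\in[0,1]^K$: then $\min(X_j)=0$ and $\max(X_j)=1$, the positive-part terms vanish, and~(\ref{eq:constraintEXPAND}) collapses to $\gamma_{0,\tau_q}\geq\sum_j\gamma^-_{j,\tau_q}$, recovering~(\ref{eq:constraint}) and the necessity-and-sufficiency already established by \citet{bondell2010noncrossing}. The main obstacle is not the minimisation itself, which is a routine vertex argument for a separable affine objective over a box, but rather being explicit that the box structure of the domain together with the complementarity of the positive and negative parts are exactly what render the single aggregated constraint both necessary and sufficient, rather than merely sufficient.
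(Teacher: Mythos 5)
Your proof is correct, but it runs in a genuinely different direction from the paper's. The paper proves Lemma \ref{theorem:generalconstr} by reduction to \citet{bondell2010noncrossing}: it takes the BRW constraint $\gamma_{0,\tau_q}\geq\sum_{j}\gamma^-_{j,\tau_q}$ on $[0,1]^K$ as given and observes that equation (\ref{eq:constraintEXPAND}) is its image under the min--max transformation $a\mapsto\frac{a-\min(X_j)}{\max(X_j)-\min(X_j)}$, checking only that $a=\max(X_j)$ maps to $1$ and $a=\min(X_j)$ maps to $0$; the worst-case structure itself is inherited from BRW rather than re-derived. You instead derive the worst case from first principles: the requirement $\gamma_{0,\tau_q}+\sum_j x_j\gamma_{j,\tau_q}\geq 0$ uniformly over the bounding box $\prod_j[\min(X_j),\max(X_j)]$ holds if and only if it holds at the vertex minimising the affine, separable left-hand side, and the complementarity of $(\gamma^+_{j,\tau_q},\gamma^-_{j,\tau_q})$ lets you write that vertex value exactly as $\gamma_{0,\tau_q}+\sum_j\min(X_j)\gamma^+_{j,\tau_q}-\sum_j\max(X_j)\gamma^-_{j,\tau_q}$, with BRW's constraint recovered as the special case $[0,1]^K$. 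Your route buys three things the paper's terse argument leaves implicit: (i) it isolates complementarity as the precise reason the aggregated formula is tight rather than merely conservative, a point the paper never raises; (ii) it establishes necessity and sufficiency over the box directly instead of importing it; and (iii) it does not need the assumption $\max(X_j)>\min(X_j)$, which the paper requires only so that its rescaling is well defined. What the paper's approach buys in exchange is brevity and an explicit link to the BRW formulation, which matters thematically since the lemma is pitched as the unrescaled version of that estimator's constraint.
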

\begin{proof}

Recall the standard non-crossing constraint formulation from \citet{bondell2010noncrossing} needs that in the worst case scenario the sum of positive shifters needs to be larger than the sum of negative shifters:
\begin{equation}
    \gamma_{0,\tau_{q}}+\sum^K_{j=1} (Z_j=0) \cdot \gamma_{j,\tau_q}^+ \geq \sum^K_{j=1} (Z_j=1) \cdot \gamma_{j,\tau_q}^- \nonumber
\end{equation}

\noindent where $Z_j\in[0,1]$ is a transformed variable of $X_j$. Assume that $\max(X_j) > \min(X_j)$ for all $j \in \{1,2,...,K\}$, i.e., $Var(X_j)>0$ for all $j$. Consider a design matrix variable $X_j$ with domain $[\min(X_j), \max(X_j)]$. $X_j$ can be normalised using the min-max transformation: $Z_{t,j} = \frac{X_{t,j} - \min(X_j)}{\max(X_j) - \min(X_j)}$, where $Z_j$ is the normalised variable. We can express $X_j=min(X_j)+Z_j\cdot[max(X_j)-min(X_j)]$. Instead of expressing the constraint in terms of the transformed variable $Z_j$, we can express it in terms of the untransformed $X_j$:

\begin{equation}
\begin{split}
    \gamma_{0,\tau_{q}}&+\sum^K_{j=1} [\min(X_j) + (Z_j=0) \cdot (\max(X_j) - \min(X_j))]\gamma_{j,\tau_q}^+ \\
    &\geq \sum^K_{j=1} [\min(X_j) + (Z_j=1) \cdot (\max(X_j) - \min(X_j))]\gamma_{j,\tau_q}^- \nonumber
    \end{split}
\end{equation}

We can simplify this equation to recover:

\begin{equation}
    \gamma_{0,\tau_{q}}+\sum^K_{j=1} \min(X_j)\gamma_{j,\tau_q}^+ \geq \sum^K_{j=1} \max(X_j)\gamma_{j,\tau_q}^- \nonumber
\end{equation}

Thus there is an equivalence between this equation and the one in \citet{bondell2010noncrossing}.
\end{proof}

This new constraint is a sufficient condition for non-crossing, since if equation (\ref{eq:constraintEXPAND}) is satisfied for the worst case, it is immediately satisfied for every observation. The novelty of equation (\ref{eq:constraintEXPAND}) is that it works on the domain of $\mathbb{D} \in \mathbb{R}$, while the original formulation in \citet{bondell2010noncrossing}, works for $\mathbb{D} \in [0,1]$ only.

To study the impact of non-crossing constraints on the estimated coefficients, it is important to be able to tighten (and loosen) these constraints. As such, the next step is to formulate a set of constraints that can become adaptive as a hyperparameter $\alpha$ varies. An intuitively appealing formulation is one that yields the traditional quantile regression estimator when setting $\alpha=0$ and the \citet{bondell2010noncrossing} when $\alpha=1$.\footnote{Technically, any scalar $\alpha>0$ would work. We set $\alpha=1$ to recover the \citet{bondell2010noncrossing} for simplicity} This leads to the following adaptive non-crossing constraints:
\begin{equation}\label{eq:ADAconstraint}
        \gamma_{0,\tau_{q}}+\sum^K_{j=1} \Big[ \Bar{X_j} - \alpha(\Bar{X_j} - min(X_j)) \Big]\gamma_{j,\tau_q}^+\geq \sum^K_{j=1} \Big[\Bar{X_j} + \alpha(max(X_j)-\Bar{X_j}) \Big] \gamma_{j,\tau_q}^-
\end{equation}
When $\alpha=0$, the constraint simplifies to imposing non-crossing quantiles evaluated at the average value of the covariates. Quantile monotonicity at this value is satisfied even by the traditional quantile regression estimator: $\Bar{X}^T\beta$ yields the empirical quantiles of $Y$, which are monotonically increasing by definition \citep{koenker2005,koenker2006quantile}. As such imposing equation (\ref{eq:ADAconstraint}) and setting $\alpha=0$ will yield the same $\beta$ coefficients as the traditional quantile regression without constraints. 

The constraints in equation (\ref{eq:ADAconstraint}) equal the non-crossing constraints when $\alpha=1$ as the equation becomes equation (\ref{eq:constraintEXPAND}). As such using equation (\ref{eq:ADAconstraint}) as a constraint allows us to recover both the traditional quantile regression estimator of \citet{koenker1978regression}, as well as the non-crossing quantile regression estimator of \citet{bondell2010noncrossing}. 

There are other ways to construct adaptive non-crossing constraints that yield traditional quantile regression as well as \citet{bondell2010noncrossing} estimator, specifically through the use of indicator functions that activate the non-crossing constraint. The advantage of equation (\ref{eq:ADAconstraint}) is that it produces a gradual transition from qunatile estimates to non-crossing estimates as $\alpha$ increases. This allows us to study the impacts these constraints have on the estimated coefficients.

\begin{theorem} \label{theorem:NC=FLASSO}
Non-crossing constraints are a type of Fused LASSO, with quantile specific hyperparameters: $k_{\tau_q}^*= \frac{\gamma_{0,\tau_q}}{\alpha}$.

\end{theorem}

\begin{proof}
For simplicity, assume $Q=2$. We begin with the non-crossing constraint given by equation (\ref{eq:ADAconstraint}):
\begin{equation} \nonumber
\gamma_{0} \geq \sum^K_{j=1}\Big[[\Bar{X_j} + \alpha(max(X_j)-\Bar{X_j})]\gamma_{j}^- - [\Bar{X_j} - \alpha(\Bar{X_j} - min(X_j))]\gamma_{j}^+\Big]
\end{equation}

\noindent where $\gamma_{j}^-\geq0$ and $\gamma_{j}^+\geq0$ represent the positive and negative parts of $\gamma_{j}$ such that $\gamma_{j} = \gamma_{j}^+ - \gamma_{j}^-$.

\noindent By Lemma \ref{theorem:generalconstr}, we can rescale the data to $\mathbb{D} \in [-1,1]$. When data is properly standardized and rescaled, we can assume $\Bar{X_j} = 0$ for each covariate $j$. For a symmetric distribution, this standardization is trivial; for asymmetric distributions, one can normalise the data prior to rescaling. After normalising and rescaling, $\Bar{X_j} \approx 0$, $max(X_j) = 1$, and $min(X_j) = -1$. With this, the constraint simplifies to:
\begin{equation} \nonumber
\gamma_{0} \geq \sum^K_{j=1}\alpha( \gamma_{j}^- + \gamma_{j}^+)
\end{equation}

\noindent Since $\gamma_{j}^- \geq 0$ and $\gamma_{j}^+ \geq 0$ by definition, we have $(\gamma_{j}^- + \gamma_{j}^+) = |\gamma_{j}^- + \gamma_{j}^+|$. Rearranging terms yields:
\begin{equation}
\frac{\gamma_{0}}{\alpha} \geq \sum^K_{j=1} |\gamma^+_{j}+\gamma^-_{j}| \nonumber
\end{equation}

\noindent This corresponds exactly to the fused shrinkage formulation in \citet{jiang2013interquantile} with $k^* = \frac{\gamma_{0}}{\alpha}$.

For $Q>2$, all $\gamma$ parameters become quantile specific, leading to quantile specific hyperparameters: $k_{\tau_q}^* = \frac{\gamma_{0,\tau_q}}{\alpha}$. As such, non-crossing constraints lead to fused LASSO shrinkage of the parameters, with quantile specific hyperparameters. This equivalence requires $\alpha > 0$, as the relationship is undefined when $\alpha = 0$. As $\alpha\rightarrow0$, $k_{\tau_q}^*\rightarrow\infty$, and we recover the QR coefficients for sufficiently small $\alpha$.
\end{proof}

The results of theorem (\ref{theorem:NC=FLASSO}) mean that when $\alpha$ is large enough, the only way to satisfy the constraint is by having $\gamma_{j,\tau_q}^-=\gamma_{j,\tau_q}^+=0$, which is equivalent to the Composite Quantile Regression setup of \citet{koenker1984note} and \citet{zou2008composite}.\footnote{Note that in the limit there is no unique solution. Because of this, as $\alpha\rightarrow\infty$ it becomes numerically challenging to find the solution, leading to numerical instability.} Because it is possible to recover the traditional quantile regression estimator, the \citet{bondell2010noncrossing} estimator, and the composite quantile regression estimator by simply varying $\alpha$, we refer to the estimator using the constraints of equation (\ref{eq:ADAconstraint}) as Generalised Non-Crossing Quantile Regression (GNCQR).\footnote{The different types of estimators are defined, with a unified notation, in the appendix.}\textsuperscript{,}\footnote{We follow \citet{powell2020quantile} in the naming convention, who develop the `Generalized Quantile Regression'.} Formally GNCQR is defined as:
\begin{equation} \label{eq:GNCQR}
    \begin{split}
        \hat{\beta}_{GNCQR}|\alpha=\underset{\beta}{argmin}&\sum^{Q}_{q=1}\sum^{T}_{t=1}\rho_q(y_{t}-x_t^T\beta_{\tau_q})\\
        s.t.~\gamma_{0,\tau_{q}}+&\sum^K_{k=1} \Big[ \Bar{X_k} - \alpha(\Bar{X_k} - min(X_k)) \Big]\gamma_{k,\tau_q}^+\geq\\ &\sum^K_{k=1} \Big[\Bar{X_k} + \alpha(max(X_k)-\Bar{X_k}) \Big] \gamma_{k,\tau_q}^-
    \end{split}
\end{equation}

Equation (\ref{eq:GNCQR}) makes it clear that $\alpha$ plays a strong role in the estimator. Given that alpha plays a similar role to $\lambda$ in LASSO, we restrict its possible values to be $\alpha\geq0$. One might be tempted to restrict potential value of $\alpha\geq1$, i.e. always enforcing non-crossing quantiles in estimation. We caution against this, because the estimated conditional quantiles are merely an approximation of the true (and unknown) DGP \citep{koenker2006quantile}. In a situation where we have model misspecification, enforcing non-crossing constraints lead to too much interquantile shrinkage as a consequence of theorem (\ref{theorem:NC=FLASSO}).

The advantage of the GNCQR estimator is that it leads to quantile specific shrinkage parameters, while only needing to set the scalar $\alpha$. In essence, the problem becomes one of model selection, but instead of selecting variables from $X$, the interest is on interquantile shrinkage, i.e. selecting variables whose impact on $Y$ captured by $\beta$ varies by quantile.

\subsection{Hyperparameter selection}

\begin{figure}[t]
    \centering
\includegraphics[width=0.75\textwidth]{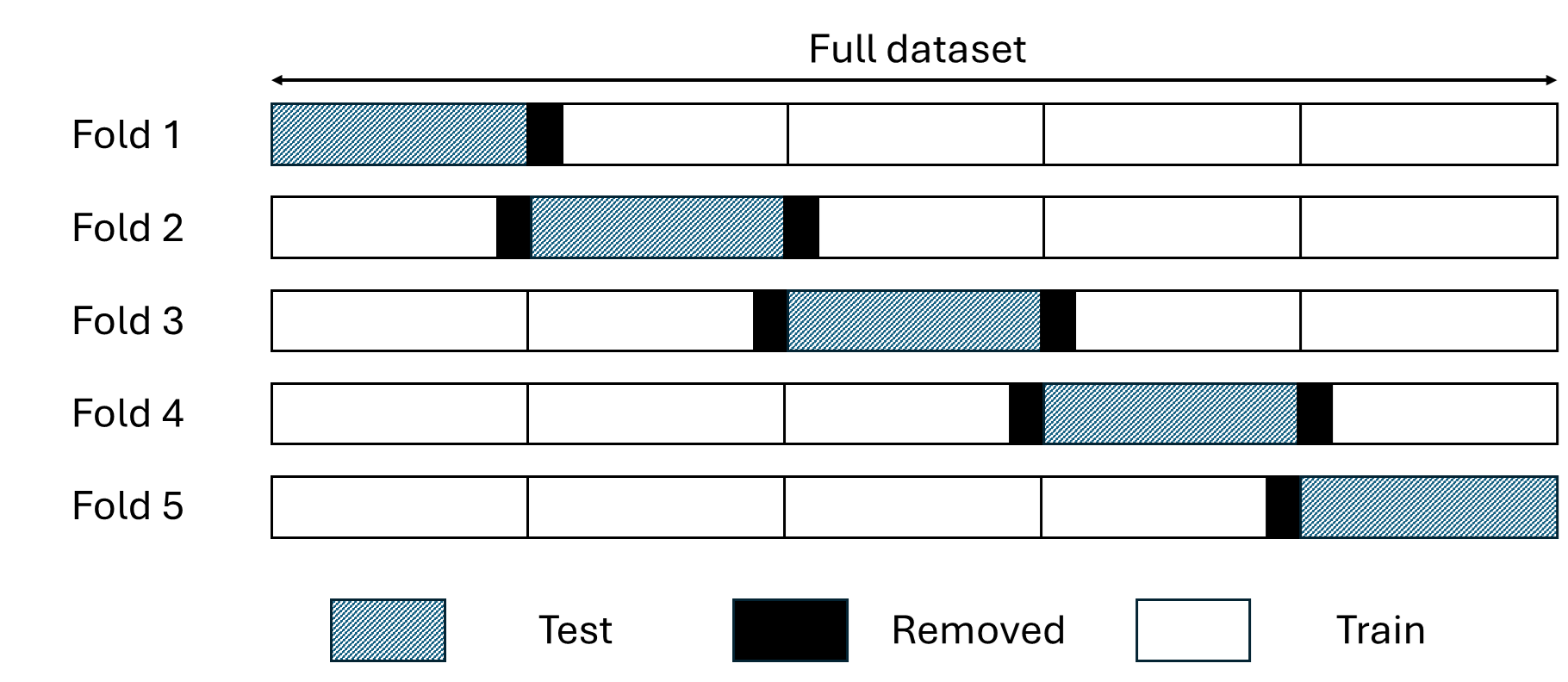}
\caption{hv-Cross Validation}
\label{fig:CV-fig}
\end{figure}

A natural candidate for choosing the degree of penalisation is cross-validation. In cross-section applications, this would be the usual $k$-fold cross-validation or a variant therefore. Of the several types of cross-validation methods available for time series application, we choose the $hv$-block CV setup of \citet{racine2000consistent}. This block setup has several desirable properties: (1)~it has been shown to have a good trade-off between bias and variance in various applications; (2)~the required number of computations does not increase with the number of observations to the degree it would with leave-one-out cross-validation,\footnote{See \citet{cerqueira2020evaluating} for a description and comparison of the different types of cross-validation for time-series data} and; (3)~when the ``$h$" in the $hv$-block is set equal to 0, we recover the standard $k$-fold cross-validation setup, so the method can be used for both cross-section and time-series data.\footnote{We use this equivalence in the Monte Carlo setups and applications so that the same cross-validation code can be employed throughout.}

Figure (\ref{fig:CV-fig}) visualises this block setup and which data are removed from the dataset. The reason to remove the data points around the test datasets in time-series data is to ensure that no data leakage occurs when evaluating the performance of the model. When data are dependent, the information from the test dataset can leak into the training set. This can inadvertently lead to overfitting and hence poor generalization. Removing the data around the test dataset can guard against this.\footnote{For more discussion on hyperparameter selection please see the appendix.}



Since the hyperparameter $\alpha$ is a scalar, one can find the optimal hyperparameter using simple grid-search. Hence, the curse of dimensionality that often limits the applicability of grid-search is not present here. Crucially, the grid-search is `embarrassingly parallel' and this can be utilised to speed up computation times \citep{bergstra2012random}. As such, throughout the paper we will utilise grid-search to obtain the optimal hyperparamter values.

\section{Monte Carlo}
\subsection{Setup}

Theorem \ref{theorem:NC=FLASSO} demonstrated the connection between Fused LASSO and non-crossing constraints. This theorem also shows that GNCQR is equivalent to having quantile-specific hyperparameters. In this section we explore the implications of this using Monte Carlo evidence, and  evaluate variable selection properties of the various estimators as well as examine their ability to recover the true quantiles. Specifically, we consider the following estimators: the non-crossing quantile regression of \citet{bondell2010noncrossing} (BRW), GNCQR, and the Fused LASSO (FLQR).\footnote{Note that we consider the Fused LASSO and not the Fused Adaptive LASSO in \citet{jiang2013interquantile}. We focus on the non-adaptive version as this allows us to examine the value added impact of quantile-specific hyperparameters.} We will also look at the two-step approach proposed in \citet{chernozhukov2010crossing}, i.e. running regular quantile regression and sorting the fitted quantiles. This will allow us to gauge how much better model selection results lead to improved fit.

Our Monte Carlo setup takes as its starting point the design used in \citet{bondell2010noncrossing}. Each Monte Carlo experiment is generated using the location scale heteroskedastic error model of the form:
\begin{equation} \label{eq:loc-scale}
    y_t=\beta_0+\beta^Tx_t+(\theta_0+\vartheta_t  \odot \theta^Tx_t)\varepsilon_t,~x_{t,k}\sim U(0,1),~\varepsilon_t\sim N(0,1)
\end{equation}

An intercept is included in each setup (i.e., $\beta_0=\theta_0=1$). Note the $k^{th}$ element of $\vartheta_{t}\in \{0,1\}$ which regulates whether the given variable has quantile variation at the given quantile. This term is included to allow `quantile varying sparsity', i.e. cases where a certain variable enters only parts of the distribution \citep{kohns2021decoupling}. A simple way to implement quantile-specific sparsity is by setting $\vartheta_t$ as an indicator function, where it takes the value of 1 only for cases when $\varepsilon_t$ is below (or above) a specific quantile. For cases where there is full quantile variation, $\vartheta_t=\textbf{1}_k$. The $t$ subscript is needed the presence of quantile variation will be dependent on the magnitude of $\varepsilon_t$.

There are four data generating processes (DGPs), each with 500 generated datasets. The first three ($y_1$, $y_2$, and $y_3$) are identical to Examples 1-3 of \citet{bondell2010noncrossing}. The fourth DGP ($y_4$) is an amendment of Example 2, with some variables only varying at the tails. Specifically:
\begin{itemize}
    \item $y_1$: $k=4$, with the parameters $\beta=\textbf{1}_k$, $\theta=0.1*\textbf{1}_k$, and $\vartheta=\textbf{1}_k$.
    \item $y_2$: $k=10$, with the parameters $\beta=(\textbf{1}_4^T,\textbf{0}_6^T)^T$, $\theta=(0.1*\textbf{1}_4^T,\textbf{0}_6^T)^T$, and $\vartheta=\textbf{1}_k$.
    \item $y_3$: $k=7$, with the parameters $\beta=\textbf{1}_k$, $\theta=(\textbf{1}_3^T,\textbf{0}_4^T)^T$, and $\vartheta=\textbf{1}_k$.
    \item $y_4$: $k=10$, with the parameters $\beta=(\textbf{1}_4^T,\textbf{0}_6^T)^T$, $\theta=(0.1*\textbf{1}_8^T,\textbf{0}_2^T)^T$, and $\vartheta=(\textbf{1}_4^T,\textbf{1}_4^T \times [I(\epsilon_t> F^{-1}_\varepsilon(0.9))+I(\epsilon_t\leq F^{-1}_\varepsilon(0.1))],\textbf{0}_2^T)^T$.
\end{itemize}

Note that for $y_4$, variable selection and fused shrinkage will both be necessary. The models considered will only allow for fused shrinkage, and as such this DGP is only included to judge the performance of the estimators in unideal situations.

To test the performance of the different estimators three sample sizes are considered for each DGP: $T\in\{50,100,200\}$. Sample sizes 100 and 200 were also considered in \citet{bondell2010noncrossing}, but 50 was not. We include this small sample setting because in macroeconometric applications it is not uncommon to apply quantile regression for such small samples.\footnote{See \citet{szendrei2023revisiting} for an application with around 50 observations, \citet{figueres2020vulnerable} for an application with around 100 observations, and \citet{adrian2019vulnerable} for an application with around 200 observations.} Applying quantile regression to analyse GDP distribution for European economies has been particularly daunting on account of data availability, so we feel that these Monte Carlo results are particularly useful for policy makers.

We also consider variation in the number of quantiles to be estimated, by  generating equidistant quantiles with varying distance between the quantiles: $\Delta_\tau\in\{0.2,0.1\}$.\footnote{for$\Delta_\tau=0.2$, the first quantile is set to 0.1.} Note that by increasing the number of quantiles, the number of estimated parameters varies. The choice to vary the estimated quantiles was driven by the fact that Theorem (\ref{theorem:NC=FLASSO}) shows how $\gamma_0$ acts as a quantile-specific hyperparameter, which is a claim we can verify by examining the variable selection properties of BRW as the number of quantiles increases.

Two measures are used to judge the variable selection performance of the estimators: True Positive Rate (TPR) and True Negative Rate (TNR). Both measures take a value between 0 (worst performance possible) and 1 (best performance possible). TPR measures the degree to which the estimator is able to capture the quantile varying coefficients, while the TNR measures the ability of the estimator to identify where no quantile variation occurs. Considering these measures together allows one to conclude whether a given estimator over-shrinks or under-shrinks. When calculating these measures, we only consider the difference in $\beta$ coefficients of the variables (without the intercept).

To measure fit, we follow \citet{bondell2010noncrossing} and report the average root mean integrated square error ($\times 100$):

\begin{equation}
    RMISE=\frac{1}{n}\Big[ \sum^{n}_{iter=1}\Big(\hat{g}_\tau(x_{iter})-g_\tau(x_{iter}) \Big)^2 \Big]^{1/2} \nonumber
\end{equation}
where $iter$ indexes a given Monte Carlo experiment, $n=500$ is the total number of evaluated Monte Carlo experiments for each DGP, $\hat{g}_\tau$  is the estimated quantile and $g_\tau$ is the true quantile given equation (\ref{eq:loc-scale}). We also report the standard error of the RMISE for all the estimators.

We create a grid of 100 equally-spaced points between $[0,1)$, and use grid-search to obtain the optimal $\alpha$ parameter. To consider large $\alpha$ options, we append to this list of candidate solutions 200 points between $[0,6]$ as exponents of base 10. Setting the hyperparameter to $10^6$ yields solutions that are close to the composite quantile regression solution. For FLQR and GNCQR, we use 10-fold cross-validation.

Note that while Theorem (\ref{theorem:NC=FLASSO}) implies that changing the $\alpha$ parameter will have a similar impact on the $\beta$'s as the $\lambda$ of a Fused LASSO estimator, the same $\alpha$ and $\lambda$ values will not lead to the same $\beta$ coefficients. This is because the quantile-specific difference in constants is an upper limit of quantile variation for GNCQR.


Table (\ref{tab:MC_RMISE1}) presents the results for the average fit of the different estimators based on the Monte Carlo experiments, while Table \ref{tab:MC_TXR} shows the variable selection performance. The RIMSE set of results for $y_1$, $y_2$, and $y_3$ when T=100 (and T=200 for $y_3$) and $\Delta\tau=0.2$ is the closest setup to \citet{bondell2010noncrossing}. The results for these setups for BRW and QR are almost identical and as such our new findings relate to the extensions to the Monte Carlo experiments of \citet{bondell2010noncrossing}.

\subsection{Variable Selection results}

\begin{table}[]
\centering
\caption{True Positive and True Negative Rates for the different Monte Carlo experiments}
\label{tab:MC_TXR}
\resizebox{1\columnwidth}{!}{%
\begin{tabular}{lr|ccccccc|ccccccc}
\hline
 &  & \multicolumn{1}{c}{$y_1$} & \multicolumn{2}{c}{$y_2$} & \multicolumn{2}{c}{$y_3$} & \multicolumn{2}{c|}{$y_4$} & \multicolumn{1}{c}{$y_1$} & \multicolumn{2}{c}{$y_2$} & \multicolumn{2}{c}{$y_3$} & \multicolumn{2}{c}{$y_4$} \\
 &  & TPR & TPR & TNR & TPR & TNR & TPR & TNR & TPR & TPR & TNR & TPR & TNR & TPR & TNR \\ \hline
 &  & \multicolumn{7}{|c|}{$\Delta\tau=0.2$} & \multicolumn{7}{c}{$\Delta\tau=0.1$}  \\
\multicolumn{2}{l|}{T-50} &  &  &  &  &  &  &  &  \\
 & BRW & 0.806 & 0.360 & 0.639 & 0.535 & 0.525 & 0.305 & 0.691 & 0.618 & 0.246 & 0.752 & 0.383 & 0.669 & 0.224 & 0.776 \\
 & GNCQR & 0.372 & 0.232 & 0.769 & 0.345 & 0.715 & 0.243 & 0.753 & 0.304 & 0.195 & 0.812 & 0.271 & 0.788 & 0.173 & 0.825 \\
 & QR & 1.000 & 1.000 & 0.000 & 1.000 & 0.000 & 1.000 & 0.000 & 1.000 & 1.000 & 0.000 & 1.000 & 0.000 & 1.000 & 0.000 \\
 & FLQR & 0.215 & 0.136 & 0.868 & 0.246 & 0.790 & 0.194 & 0.819 & 0.127 & 0.107 & 0.901 & 0.154 & 0.870 & 0.114 & 0.888 \\
\multicolumn{2}{l|}{T=100} &  &  &  &  &  &  &  &  \\
 & BRW & 0.938 & 0.510 & 0.501 & 0.694 & 0.393 & 0.390 & 0.582 & 0.792 & 0.363 & 0.641 & 0.531 & 0.557 & 0.318 & 0.687 \\
 & GNCQR & 0.364 & 0.228 & 0.780 & 0.430 & 0.689 & 0.279 & 0.715 & 0.294 & 0.183 & 0.828 & 0.349 & 0.767 & 0.179 & 0.833 \\
 & QR & 1.000 & 1.000 & 0.000 & 1.000 & 0.000 & 1.000 & 0.000 & 1.000 & 1.000 & 0.000 & 1.000 & 0.000 & 1.000 & 0.000 \\
 & FLQR & 0.202 & 0.122 & 0.885 & 0.299 & 0.776 & 0.184 & 0.834 & 0.110 & 0.074 & 0.927 & 0.196 & 0.860 & 0.078 & 0.932 \\
\multicolumn{2}{l|}{T=200} &  &  &  &  &  &  &  &  \\
 & BRW & 0.988 & 0.659 & 0.340 & 0.846 & 0.262 & 0.481 & 0.459 & 0.923 & 0.488 & 0.517 & 0.690 & 0.445 & 0.409 & 0.575 \\
 & GNCQR & 0.381 & 0.225 & 0.787 & 0.564 & 0.672 & 0.280 & 0.700 & 0.325 & 0.166 & 0.846 & 0.465 & 0.765 & 0.203 & 0.797 \\
 & QR & 1.000 & 1.000 & 0.000 & 1.000 & 0.000 & 1.000 & 0.000 & 1.000 & 1.000 & 0.000 & 1.000 & 0.000 & 1.000 & 0.000 \\
 & FLQR & 0.215 & 0.130 & 0.883 & 0.413 & 0.746 & 0.209 & 0.821 & 0.139 & 0.061 & 0.946 & 0.298 & 0.832 & 0.098 & 0.917 \\ \hline
\end{tabular}%
}
\end{table}

The results on variable selection presented in Table \ref{tab:MC_TXR} are more revealing. Since the GNCQR can recover both the BRW (when setting $\alpha=1$) and the QR (as $\alpha\rightarrow0$), we can compare the performance of these estimators to see the impact $\alpha$ has on the coefficient profiles. Given the findings of Theorem \ref{theorem:NC=FLASSO}, we expect the TPR to decrease and the TNR to increase as $\alpha$ increases. We find that the TPR of BRW is below 1 (and the TNR is above 0) for all DGPs, which corroborates Theorem \ref{theorem:NC=FLASSO}, which states that non-crossing constraints are a special type of fused shrinkage. This is not simply a feature of the Monte Carlo design, as the QR yields a TPR of 1 and a TNR of 0 in all cases. Note that for $y_1$, all variables are quantile-varying and as such TNR does not exist for this case.

Among the estimators that have some fused shrinkage, BRW yields the highest TPR for all DGPs for all cases considered. GNCQR always ranks second and FLQR has the worst performance for TPR. However, this superior performance in TPR for BRW is coupled with the worst performance when it comes to TNR. For TNR, FLQR produces the best results, with GNCQR a close second. From these results we can see that BRW undershrinks quantile variation, FLQR overshrinks, and GNCQR yields a middle-ground option.

As will be seen in the fit results of Table (\ref{tab:MC_RMISE1}), we find that GNCQR not only provides robustness over FLQR for $y_4$, but is also able to give fits close to FLQR while retaining better model selection properties. In particular, GNCQR is able to yield much better TPR than FLQR without substantially compromising its ability to identify the true negative differences.

Increasing the number of quantiles has a marked impact on variable selection: it lowers TPR and increases TNR for all estimators (except the QR). This further corroborates Theorem \ref{theorem:NC=FLASSO}. Increasing sample size also influences the TPR and TNR of all estimators. For BRW and GNCQR, increasing the sample size increases TPR but lowers TNR for all DGPs and all $\Delta_\tau$. However, for FLQR we see similar tendency when $\Delta_\tau=0.2$ but not when $\Delta_\tau=0.1$. In particular, for FLQR for $y_4$, the TPR decreases as the sample size increases when $\Delta_\tau=0.1$. This is particularly troubling since FLQR has the worst TPR of all the estimators.

\subsection{Fit results}
Results on fit in Table (\ref{tab:MC_RMISE1}) reveal that GNCQR and FLQR provide best fits, even beating BRW estimator, for $y_1$, $y_2$, and $y_3$. However, for $y_4$ FLQR fails to yield improvements over BRW. While GNCQR also has difficulties in $y_4$, it is much closer to the fits of BRW, with both yielding better fits than the traditional QR or FLQR. This is because GNCQR recovers BRW when $\alpha=1$, so it will not do much worse than BRW. Hence, GNCQR is more robust to DGPs that have quantile-specific sparsity than the simple FLQR.

Unsurprisingly, $y_4$ produces the worst fit for all estimators, but as more data becomes available, the performance of all the estimators initially improves. The key difference lies with FLQR, where the fits do not improve as much as the other estimators when the sample size increases from $T=100$ to $T=200$. This also highlights that to yield improvements in the traditional LASSO setting for such DGPs one needs to explicitly penalise the level of $\beta_\tau$ coefficients too; see also \citet{jiang2014interquantile}. Overall, increasing sample size improves the fit for all estimators, while increasing the fineness of the grid of quantiles yields no significant differences. The key takeaway is that GNCQR either gives further improvements in fit over BRW, or (at worst) does as well as BRW.

Looking at the rearrangement method proposed by \citet{chernozhukov2010crossing} reveals that sorting the fitted quantiles for the quantile regression estimator yields modest improvements. These improvements increase as the number of estimated quantiles increases especially for smaller sample sizes. This is intuitive, as with more quantiles there is a higher chance of quantile crossing. However, the improvements in fit from rearrangement get smaller as the sample size increases.

\begingroup
\setlength{\tabcolsep}{2pt} 
\begin{landscape}
\begin{table}[]
\centering
\caption{RIMSE of different models for the different Monte Carlo experiments}
\label{tab:MC_RMISE1}
\resizebox{\columnwidth}{!}{%
\begin{tabular}{lr|cccccccccc|cccccccccc|cccccccccc}
\hline
 &  & \multicolumn{10}{c|}{T=50} & \multicolumn{10}{c|}{T=100} & \multicolumn{10}{c}{T=200} \\
 & $\tau$ & \multicolumn{2}{c}{0.1} & \multicolumn{2}{c}{0.3} & \multicolumn{2}{c}{0.5} & \multicolumn{2}{c}{0.7} & \multicolumn{2}{c|}{0.9} & \multicolumn{2}{c}{0.1} & \multicolumn{2}{c}{0.3} & \multicolumn{2}{c}{0.5} & \multicolumn{2}{c}{0.7} & \multicolumn{2}{c|}{0.9} & \multicolumn{2}{c}{0.1} & \multicolumn{2}{c}{0.3} & \multicolumn{2}{c}{0.5} & \multicolumn{2}{c}{0.7} & \multicolumn{2}{c}{0.9} \\
 & \multicolumn{1}{l}{} & \multicolumn{1}{|l}{bias} & \multicolumn{1}{l}{std. err} & \multicolumn{1}{l}{bias} & \multicolumn{1}{l}{std. err} & \multicolumn{1}{l}{bias} & \multicolumn{1}{l}{std. err} & \multicolumn{1}{l}{bias} & \multicolumn{1}{l}{std. err} & \multicolumn{1}{l}{bias} & \multicolumn{1}{l|}{std. err} & \multicolumn{1}{l}{bias} & \multicolumn{1}{l}{std. err} & \multicolumn{1}{l}{bias} & \multicolumn{1}{l}{std. err} & \multicolumn{1}{l}{bias} & \multicolumn{1}{l}{std. err} & \multicolumn{1}{l}{bias} & \multicolumn{1}{l}{std. err} & \multicolumn{1}{l}{bias} & \multicolumn{1}{l|}{std. err} & \multicolumn{1}{l}{bias} & \multicolumn{1}{l}{std. err} & \multicolumn{1}{l}{bias} & \multicolumn{1}{l}{std. err} & \multicolumn{1}{l}{bias} & \multicolumn{1}{l}{std. err} & \multicolumn{1}{l}{bias} & \multicolumn{1}{l}{std. err} & \multicolumn{1}{l}{bias} & \multicolumn{1}{l}{std. err} \\ \hline \hline
\multicolumn{2}{l|}{$\Delta\tau=0.2$} &  &  &  &  &  &  &  &  &  &  &  &  &  &  &  &  &  &  &  &  &  &  &  &  &  &  &  &  &  &  \\
\multicolumn{2}{l|}{$y_1$} &  &  &  &  &  &  &  &  &  &  &  &  &  &  &  &  &  &  &  &  &  &  &  &  &  &  &  &  &  &  \\
 & BRW & 55.3 & 0.76 & 44.3 & 0.65 & 42.8 & 0.62 & 44.8 & 0.66 & 56.5 & 0.80 & 41.7 & 0.58 & 32.9 & 0.47 & 30.3 & 0.45 & 32.0 & 0.49 & 41.1 & 0.61 & 30.4 & 0.45 & 23.5 & 0.33 & 22.1 & 0.32 & 23.2 & 0.33 & 30.4 & 0.42 \\
 & GNCQR & 46.5 & 0.72 & 41.7 & 0.61 & 40.8 & 0.61 & 41.9 & 0.63 & 48.2 & 0.78 & 33.6 & 0.54 & 29.2 & 0.44 & 27.8 & 0.42 & 29.3 & 0.45 & 33.6 & 0.57 & 23.9 & 0.38 & 20.9 & 0.30 & 20.0 & 0.29 & 20.5 & 0.30 & 24.3 & 0.39 \\
 & QR & 60.7 & 0.84 & 46.8 & 0.68 & 45.3 & 0.65 & 48.0 & 0.71 & 62.4 & 0.93 & 43.9 & 0.63 & 33.9 & 0.50 & 31.2 & 0.47 & 33.0 & 0.49 & 43.5 & 0.66 & 30.8 & 0.46 & 23.8 & 0.33 & 22.4 & 0.33 & 23.5 & 0.33 & 31.0 & 0.44 \\
 & FLQR & 46.6 & 0.72 & 41.8 & 0.61 & 40.6 & 0.61 & 42.1 & 0.65 & 48.2 & 0.77 & 33.2 & 0.55 & 29.5 & 0.45 & 27.9 & 0.44 & 29.3 & 0.45 & 34.2 & 0.59 & 24.1 & 0.39 & 21.0 & 0.29 & 20.1 & 0.30 & 20.6 & 0.30 & 24.9 & 0.41 \\
\multicolumn{2}{l|}{$y_2$} &  &  &  &  &  &  &  &  &  &  &  &  &  &  &  &  &  &  &  &  &  &  &  &  &  &  &  &  &  &  \\
 & BRW & 73.4 & 0.69 & 63.0 & 0.58 & 61.2 & 0.57 & 63.1 & 0.62 & 73.2 & 0.72 & 55.0 & 0.51 & 45.9 & 0.44 & 44.6 & 0.43 & 46.4 & 0.42 & 54.4 & 0.52 & 40.8 & 0.39 & 33.1 & 0.32 & 31.7 & 0.30 & 33.2 & 0.32 & 40.7 & 0.37 \\
 & GNCQR & 69.4 & 0.75 & 61.9 & 0.60 & 60.1 & 0.58 & 62.0 & 0.63 & 68.3 & 0.76 & 48.7 & 0.53 & 43.8 & 0.42 & 43.3 & 0.42 & 44.0 & 0.43 & 48.5 & 0.53 & 34.3 & 0.38 & 30.6 & 0.31 & 29.9 & 0.29 & 30.8 & 0.30 & 33.8 & 0.37 \\
 & QR & 94.5 & 0.84 & 72.2 & 0.70 & 67.6 & 0.67 & 72.3 & 0.73 & 94.8 & 0.90 & 67.1 & 0.65 & 51.9 & 0.47 & 49.3 & 0.47 & 52.2 & 0.49 & 67.9 & 0.66 & 47.6 & 0.47 & 36.5 & 0.35 & 34.6 & 0.33 & 36.7 & 0.36 & 47.2 & 0.45 \\
 & FLQR & 68.5 & 0.74 & 61.8 & 0.61 & 59.9 & 0.59 & 61.7 & 0.62 & 69.3 & 0.78 & 47.9 & 0.54 & 43.6 & 0.43 & 43.3 & 0.42 & 44.0 & 0.43 & 49.5 & 0.55 & 33.8 & 0.38 & 30.6 & 0.32 & 30.1 & 0.30 & 30.9 & 0.31 & 34.7 & 0.41 \\
\multicolumn{2}{l|}{$y_3$} &  &  &  &  &  &  &  &  &  &  &  &  &  &  &  &  &  &  &  &  &  &  &  &  &  &  &  &  &  &  \\
 & BRW & 136.8 & 1.62 & 110.4 & 1.28 & 108.3 & 1.28 & 114.2 & 1.36 & 137.6 & 1.70 & 101.3 & 1.19 & 79.7 & 0.94 & 76.2 & 0.83 & 80.8 & 0.91 & 100.0 & 1.19 & 75.7 & 0.83 & 58.8 & 0.65 & 54.8 & 0.62 & 57.6 & 0.65 & 74.0 & 0.83 \\
 & GNCQR & 132.5 & 1.71 & 109.4 & 1.30 & 106.9 & 1.27 & 112.5 & 1.32 & 133.9 & 1.69 & 98.6 & 1.22 & 78.0 & 0.92 & 74.7 & 0.83 & 80.0 & 0.92 & 98.6 & 1.16 & 75.2 & 0.83 & 57.2 & 0.63 & 52.9 & 0.62 & 56.9 & 0.65 & 74.4 & 0.80 \\
 & QR & 167.7 & 2.00 & 124.1 & 1.38 & 118.2 & 1.39 & 127.8 & 1.49 & 165.7 & 1.95 & 116.8 & 1.37 & 87.8 & 1.01 & 82.7 & 0.91 & 89.3 & 0.97 & 115.2 & 1.37 & 83.1 & 0.96 & 62.8 & 0.71 & 58.1 & 0.65 & 61.8 & 0.70 & 81.1 & 0.91 \\
 & FLQR & 133.7 & 1.74 & 109.7 & 1.30 & 106.9 & 1.29 & 113.5 & 1.37 & 136.4 & 1.67 & 99.9 & 1.18 & 78.3 & 0.92 & 75.8 & 0.86 & 81.7 & 0.92 & 100.1 & 1.13 & 77.9 & 0.85 & 57.4 & 0.64 & 53.8 & 0.62 & 58.6 & 0.66 & 74.5 & 0.86 \\
\multicolumn{2}{l|}{$y_4$} &  &  &  &  &  &  &  &  &  &  &  &  &  &  &  &  &  &  &  &  &  &  &  &  &  &  &  &  &  &  \\
 & BRW & 225.1 & 2.40 & 85.9 & 1.61 & 77.3 & 1.27 & 85.5 & 1.66 & 224.8 & 2.29 & 194.4 & 2.28 & 53.1 & 0.66 & 48.4 & 0.55 & 52.9 & 0.63 & 197.5 & 2.22 & 176.9 & 2.32 & 37.4 & 0.40 & 33.6 & 0.36 & 37.4 & 0.38 & 180.5 & 2.34 \\
 & GNCQR & 224.7 & 3.06 & 85.5 & 1.66 & 77.8 & 1.36 & 85.4 & 1.78 & 222.9 & 2.97 & 196.5 & 3.18 & 52.2 & 0.68 & 48.6 & 0.59 & 51.9 & 0.66 & 199.8 & 3.06 & 181.6 & 3.28 & 36.1 & 0.40 & 34.1 & 0.37 & 36.0 & 0.39 & 186.0 & 3.33 \\
 & QR & 279.7 & 1.74 & 100.0 & 2.34 & 72.5 & 1.09 & 98.9 & 2.32 & 282.4 & 1.78 & 231.0 & 1.42 & 54.5 & 0.81 & 48.7 & 0.51 & 54.7 & 0.75 & 232.2 & 1.44 & 199.5 & 1.50 & 36.7 & 0.41 & 34.2 & 0.35 & 36.8 & 0.37 & 201.6 & 1.52 \\
 & FLQR & 228.3 & 3.40 & 85.7 & 1.66 & 79.0 & 1.38 & 85.8 & 1.84 & 228.6 & 2.91 & 200.9 & 3.55 & 51.9 & 0.67 & 49.2 & 0.61 & 51.2 & 0.65 & 205.7 & 3.06 & 195.3 & 3.39 & 35.4 & 0.40 & 34.0 & 0.37 & 35.1 & 0.38 & 192.2 & 3.11 \\ \hline
 \multicolumn{2}{l|}{$\Delta\tau=0.1$} &  &  &  &  &  &  &  &  &  &  &  &  &  &  &  &  &  &  &  &  &  &  &  &  &  &  &  &  &  &  \\
\multicolumn{2}{l|}{$y_1$} &  &  &  &  &  &  &  &  &  &  &  &  &  &  &  &  &  &  &  &  &  &  &  &  &  &  &  &  &  &  \\
 & BRW & 54.3 & 0.76 & 44.0 & 0.64 & 42.8 & 0.62 & 44.8 & 0.66 & 55.6 & 0.78 & 40.6 & 0.57 & 32.5 & 0.46 & 29.8 & 0.44 & 31.6 & 0.49 & 40.0 & 0.59 & 29.7 & 0.44 & 23.3 & 0.33 & 21.8 & 0.33 & 22.9 & 0.32 & 29.7 & 0.42 \\
 & GNCQR & 46.2 & 0.72 & 41.6 & 0.61 & 40.7 & 0.61 & 41.7 & 0.64 & 47.6 & 0.76 & 33.5 & 0.53 & 29.2 & 0.45 & 27.7 & 0.42 & 28.9 & 0.44 & 32.7 & 0.54 & 24.0 & 0.37 & 20.9 & 0.29 & 19.9 & 0.29 & 20.3 & 0.29 & 23.7 & 0.37 \\
 & QR & 60.7 & 0.84 & 46.8 & 0.68 & 45.3 & 0.65 & 48.0 & 0.71 & 62.4 & 0.93 & 43.9 & 0.63 & 33.9 & 0.50 & 31.2 & 0.47 & 33.0 & 0.49 & 43.5 & 0.66 & 30.8 & 0.46 & 23.8 & 0.33 & 22.4 & 0.33 & 23.5 & 0.33 & 31.0 & 0.44 \\
 & FLQR & 46.4 & 0.72 & 41.5 & 0.61 & 40.7 & 0.60 & 42.0 & 0.64 & 48.1 & 0.81 & 32.5 & 0.53 & 29.0 & 0.44 & 27.8 & 0.42 & 29.1 & 0.44 & 33.2 & 0.56 & 24.2 & 0.39 & 20.9 & 0.29 & 20.0 & 0.29 & 20.3 & 0.30 & 24.0 & 0.38 \\
\multicolumn{2}{l|}{$y_2$} &  &  &  &  &  &  &  &  &  &  &  &  &  &  &  &  &  &  &  &  &  &  &  &  &  &  &  &  &  &  \\
 & BRW & 73.2 & 0.69 & 63.1 & 0.59 & 61.4 & 0.58 & 63.2 & 0.61 & 72.3 & 0.71 & 54.3 & 0.50 & 45.9 & 0.43 & 44.6 & 0.43 & 46.1 & 0.43 & 53.6 & 0.52 & 40.0 & 0.38 & 32.9 & 0.32 & 31.6 & 0.30 & 33.0 & 0.32 & 39.9 & 0.37 \\
 & GNCQR & 69.9 & 0.74 & 62.4 & 0.61 & 60.6 & 0.59 & 62.5 & 0.63 & 69.1 & 0.78 & 48.8 & 0.53 & 43.8 & 0.42 & 43.2 & 0.43 & 43.9 & 0.43 & 48.6 & 0.54 & 33.6 & 0.36 & 30.3 & 0.30 & 29.7 & 0.29 & 30.5 & 0.30 & 33.3 & 0.35 \\
 & QR & 94.5 & 0.84 & 72.2 & 0.70 & 67.6 & 0.67 & 72.3 & 0.73 & 94.8 & 0.90 & 67.1 & 0.65 & 51.9 & 0.47 & 49.3 & 0.47 & 52.2 & 0.49 & 67.9 & 0.66 & 47.6 & 0.47 & 36.5 & 0.35 & 34.6 & 0.33 & 36.7 & 0.36 & 47.2 & 0.45 \\
 & FLQR & 69.6 & 0.76 & 62.3 & 0.63 & 60.3 & 0.60 & 62.4 & 0.63 & 69.6 & 0.79 & 48.0 & 0.53 & 43.9 & 0.43 & 43.1 & 0.42 & 43.9 & 0.43 & 48.8 & 0.55 & 33.1 & 0.35 & 30.3 & 0.31 & 29.6 & 0.29 & 30.4 & 0.30 & 33.3 & 0.38 \\
\multicolumn{2}{l|}{$y_3$} &  &  &  &  &  &  &  &  &  &  &  &  &  &  &  &  &  &  &  &  &  &  &  &  &  &  &  &  &  &  \\
 & BRW & 135.2 & 1.62 & 110.6 & 1.26 & 108.0 & 1.25 & 113.6 & 1.36 & 137.3 & 1.67 & 100.1 & 1.18 & 79.6 & 0.91 & 76.2 & 0.82 & 80.5 & 0.89 & 98.0 & 1.19 & 73.9 & 0.82 & 57.9 & 0.64 & 54.1 & 0.62 & 57.3 & 0.67 & 73.0 & 0.81 \\
 & GNCQR & 130.9 & 1.67 & 109.7 & 1.25 & 106.5 & 1.27 & 112.1 & 1.32 & 133.3 & 1.65 & 97.5 & 1.20 & 77.8 & 0.90 & 74.1 & 0.82 & 78.2 & 0.90 & 97.3 & 1.14 & 74.4 & 0.80 & 56.8 & 0.63 & 52.7 & 0.61 & 56.2 & 0.65 & 73.1 & 0.80 \\
 & QR & 167.7 & 2.00 & 124.1 & 1.38 & 118.2 & 1.39 & 127.8 & 1.49 & 165.7 & 1.95 & 116.8 & 1.37 & 87.8 & 1.01 & 82.7 & 0.91 & 89.3 & 0.97 & 115.2 & 1.37 & 83.1 & 0.96 & 62.8 & 0.71 & 58.1 & 0.65 & 61.8 & 0.70 & 81.1 & 0.91 \\
 & FLQR & 132.6 & 1.73 & 110.9 & 1.29 & 106.7 & 1.30 & 113.5 & 1.34 & 135.2 & 1.64 & 100.6 & 1.19 & 78.6 & 0.96 & 74.6 & 0.84 & 79.4 & 0.90 & 98.0 & 1.14 & 78.2 & 0.83 & 57.4 & 0.64 & 53.4 & 0.62 & 57.8 & 0.65 & 74.4 & 0.85 \\
\multicolumn{2}{l|}{$y_4$} &  &  &  &  &  &  &  &  &  &  &  &  &  &  &  &  &  &  &  &  &  &  &  &  &  &  &  &  &  &  \\
 & BRW & 224.3 & 2.52 & 84.2 & 1.63 & 75.5 & 1.26 & 84.1 & 1.67 & 223.2 & 2.36 & 193.5 & 2.37 & 50.9 & 0.64 & 47.0 & 0.53 & 51.1 & 0.64 & 196.9 & 2.30 & 175.4 & 2.43 & 35.6 & 0.41 & 32.5 & 0.34 & 35.4 & 0.36 & 179.4 & 2.45 \\
 & GNCQR & 222.0 & 3.33 & 82.6 & 1.64 & 75.4 & 1.29 & 82.9 & 1.78 & 217.1 & 3.22 & 190.6 & 3.53 & 49.1 & 0.63 & 47.1 & 0.56 & 49.0 & 0.62 & 194.5 & 3.33 & 179.8 & 3.51 & 33.7 & 0.39 & 32.3 & 0.35 & 33.4 & 0.37 & 182.9 & 3.54 \\
 & QR & 279.7 & 1.74 & 100.0 & 2.34 & 72.5 & 1.09 & 98.9 & 2.32 & 282.4 & 1.78 & 231.0 & 1.42 & 54.5 & 0.81 & 48.7 & 0.51 & 54.7 & 0.75 & 232.2 & 1.44 & 199.5 & 1.50 & 36.7 & 0.41 & 34.2 & 0.35 & 36.8 & 0.37 & 201.6 & 1.52 \\
 & FLQR & 226.2 & 3.60 & 84.5 & 1.84 & 75.5 & 1.30 & 82.7 & 1.80 & 222.7 & 3.27 & 194.0 & 3.75 & 48.7 & 0.60 & 47.3 & 0.57 & 48.6 & 0.62 & 199.9 & 3.36 & 190.5 & 3.64 & 33.1 & 0.37 & 32.1 & 0.35 & 33.0 & 0.36 & 190.0 & 3.41 \\ \hline
\end{tabular}%
}
\end{table}
\end{landscape}
\endgroup

Comparing the rearrangement method with the alternative estimators shows that although rearrangement helps improve fit, the alternative estimators still perform better than the sorted QR. This highlights how better interquantile variable selection translates to superior fit beyond simple quantile sorting.

Note that in all the Monte Carlo runs, we have estimated a correctly specified model. In such situations the BRW estimator will (almost) always yield improvements over the QR. When the estimated model is misspecified, imposing strict non-crossing constraints can lead to worse coefficient bias than the QR. The fact that the GNCQR allows for some quantile crossing when $\alpha<1$, means that the estimator is more robust to misspecification than the BRW. In essence, when $0<\alpha_{opt}<1$ the GNCQR provides the best linear approximation of the quantiles while regularising some of the quantile variation in the coefficients.

In summary, GNCQR and FLQR provides better fit than BRW for $y_1$, $y_2$, and $y_3$, while for $y_4$ BRW performs best but GNCQR is nearly as good and both are preferred over the FLQR. This robustness of GNCQR is attractive in quantile applications as it is difficult to know ex ante whether quantile specific sparsity is present in the DGP. The variable selection results demonstrate that BRW undershrinks while FLQR overshrinks. Taking the fit and variable selection results together, we can see that GNCQR provides good fit while retaining better variable selection properties than FLQR.

\section{Growth-at-Risk results}
The canonical GaR uses US quarterly GDP growth in conjunction with the NFCI to obtain estimates of downside risk. Using quantile regression, GaR estimates can be obtained by estimating the following model:
\begin{equation}
    y_{t+h} = x_t'\beta_{\tau_q} + \varepsilon_{t+h}
\end{equation}

for $t=1,\cdots,T-h$, where $h$ refers to the forecast horizon. $x_t$ includes a constant (intercept), current quarterly GDP growth (annualised), and the average of the NFCI for the given quarter. The quarterly data cover 1973Q1 to 2023Q1. This sample includes the original sample of \citet{adrian2019vulnerable}, augmented by the COVID-19 crisis. We consider one- and four-quarter ahead forecast horizons ($h=1,4$). We follow \citet{adrian2019vulnerable}, and forecast average growth rates over the specific horizons.

To evaluate the performance of the different estimators, we will first conduct a small psuedo forecasting exercise similar to \citet{carriero2025specification}. This is followed by looking at the coefficient profiles of the different models to see whether we corroborate the finding of \citet{adrian2019vulnerable}, i.e. that lagged GDP growth acts as a location shifter, and NFCI is the key driver of nonlinearities in GDP growth. We will briefly look at the in-sample and out-of-sample weighted residuals for several values of $\alpha$ to gauge the bias-variance trade-off. We then turn to the policy implications of selecting models that incorporate interquantile shrinkage.

\subsection{Out-of-Sample Performance}
To evaluate out-of-sample performance of the estimators we will do a pseudo forecasting exercise similar to the one in \citet{carriero2025specification}. For the one- and four-quarter ahead forecast horizon we will compute the out-of-sample GDP densities on an expanding window, where the initial in-sample period uses the first 50 observations of the sample. This means that there are in total $150-h$ forecast windows to evaluate.

\subsubsection{Crossing incidence}
Before evaluating the out-of-sample performance, we note that having $\alpha=1$ only ensures non-crossing quantiles in-sample but not out-of-sample. However, when setting $\alpha>1$, the non-crossing constraints are tighter which leads to shrinking the fitted quantiles towards parallel lines. This in turn leads to less out-of-sample crossing as well. Since the FLQR also shrinks towards parallel fitted quantiles, one question is whether the FLQR also leads to less out-of-sample crossing. To this end we will compare the out-of-sample crossing incidence for all the estimators (QR, BRQ, GNCQR, FLQR). The crossing incidence is calculated by comparing the fitted quantiles with the sorted quantiles following the procedure of \citet{chernozhukov2010crossing}:

\begin{equation}
    CrossI=\frac{1}{Q(T-50)}\sum^T_{t=51}\sum^Q_{q=1}\textbf{I}[{\hat{\mathcal{Q}}({\tau_q,t+h|\mathcal{F}_t})=\hat{\mathcal{Q}}_{sort}({\tau_q,t+h|\mathcal{F}_t})}]
\end{equation}

\noindent where $\hat{\mathcal{Q}}({\tau_q,t+h|\mathcal{F}_t})$ is the forecasted quantile at time $t$ and forecast horizon $h$ and $\hat{\mathcal{Q}}_{sort}({\tau_q,t+h|\mathcal{F}_t})$ is the sorted forecast quantile. The crossing incidence measures what proportion of quantiles need sorting after estimation to obtain a valid CDF. The lower the $CrossI$ value the less quantiles need to be rearranged after estimation. The out-of-sample crossing incidence table is presented in table (\ref{tab:CrossI})

\begin{table}[]
    \centering
    \caption{Out-of-Sample Crossing Incidence for the different estimators}
    \label{tab:CrossI}
    \begin{tabular}{rcc}
\textit{} & \textbf{h=1} & \textbf{h=4} \\
\hline
$QR$ & 6.32\% & 5.62\% \\
$BRW$ & 1.37\% & 1.47\% \\
$FLQR$ & 0.91\% & 0.00\% \\
$GNCQR$ & 3.19\% & 1.79\% \\
\hline
\end{tabular}
\end{table}

The first thing to note is that more crossing occurs at the shorter forecast horizon than the longer one for all estimators. Second, the out-of-sample crossing incidence is the highest for the traditional quantile regression method. This indicates that the rearrangement algorithm of \citet{chernozhukov2010crossing} would have the most gains for the traditional quantile estimator. Note, that as mentioned by \citet{bondell2010noncrossing}, post-processing methods like the rearrangement method have the potential to improve fitted quantiles without changing the estimated coefficients. Third, GNCQR yields the lowest out-of-sample crossing incidence, even yielding proper densities for $h=4$. Finally, the FLQR estimator leads to worse out-of-sample crossing than the BRW estimator. Taken together these results indicate that the GNCQR is most likely to yield proper forecasted densities.

\subsubsection{Forecast Performance}
Given that the sorting procedure is widely used in forecasting settings, we will evaluate forecast performance of sorted and unsorted quantiles. We opt to look at both sorted and unsorted quantiles because of the results in table (\ref{tab:CrossI}) showing that sorting is likely to improve the performance of the QR and FLQR more than the BRW and GNCQR. To evaluate forecast performance, several measures are used. First, the quantile score (QS) is used with is the tick-loss weighted residual for a given forecast observation \citep{giacomini2005evaluation}:

\begin{equation}
    QS_{\tau_q,t+h}=(y_{t+h}-\hat{\mathcal{Q}}({\tau_q,t+h|\mathcal{F}_t}))(\tau_q-\textbf{I}[{|y_{t+h}\leq\hat{\mathcal{Q}}({\tau_q,t+h|\mathcal{F}_t})|}])
\end{equation}

\noindent where $\hat{\mathcal{Q}}({\tau_q,t+h|\mathcal{F}_t})$ is the forecasted quantile and $y_{t+h}$ is the unobserved value of GDP at time $t$.

To get an overall picture of density forecast performance we will follow \citet{knotek2019financial} and \citet{carriero2025specification} in using diffusion indices to gauge relative forecast performance of the different estimators. For our model comparison we will use the traditional quantile estimator (QR) to be the baseline model. Let $rQS_{i,\tau_q,h}$ be the ratio of the Quantile Score for estimator i, at quantile $\tau_q$, for forecast horizon $h=1,4$. Given the ratios of QS we calculate the case specific index as follow:

\begin{equation}
f_{i,\tau_q,h}=
    \begin{cases}
    1& \text{if } rQS\geq 1+s\\
    -1& \text{if } rQS\leq1-s\\
    0 & \text{otherwise}
\end{cases}
\end{equation}

\noindent where $s$ is the chosen sensitivity in relative performance. To obtain the diffusion index we take the average of $f_{i,\tau_q,h}$ over $\tau_q$, and $h$, i.e. $DI_i=N^{-1}\sum_{\forall\tau_q,h}f_{i,\tau_q,h}$. By construction $DI_i$ is between 1 and -1, with lower values indicating better performance relative to the QR forecasts. Values closer to 0 indicate no difference between the estimator and traditional quantile regression. Note, that the diffusion index is sensitive to the value of $s$ chosen. Setting $s$ to smaller values makes the performance index more sensitive to changes relative to the QR, while setting $s$ to larger values will only consider large improvements in forecasting performance. In \citet{carriero2025specification} $s=0.05$ while in \citet{knotek2019financial} $s=0.1$. Rather than pick one specific value for $s$ we will compare various sensitivity values: $s={0.005,0.01,0.05}$. The results for the diffusion indices for the various $s$ values for the unsorted and sorted. The results for the diffusion index are shown in table (\ref{tab:DI}). Note that in this table the reference model is always the respective QR model, i.e. for unsorted quantiles the reference is QS derived from the unsorted QR fits while for the sorted quantiles the reference is the QS derived from the sorted QR fits.

\begin{table}[]
\centering
\caption{Diffusion Index for the estimators at different sensitivities}
\label{tab:DI}
\begin{tabular}{l|ccc|ccc}
 & \multicolumn{3}{c|}{Unsorted quantiles} & \multicolumn{3}{c}{Sorted quantiles} \\
 & BRW & GNCQR & FLQR & BRW & GNCQR & FLQR \\ \hline
s=0.005 & -28.95\% & -36.84\% & 15.79\% & -15.79\% & -10.53\% & 31.58\% \\
s=0.01 & -13.16\% & -31.58\% & 5.26\% & 0.00\% & -7.89\% & 21.05\% \\
s=0.05 & -7.89\% & -5.26\% & 0.00\% & 0.00\% & 0.00\% & 2.63\% \\ \hline
\end{tabular}

\end{table}

The table reveals that the GNCQR and BRW has gains over the QR estimator regardless if the sorted or unsorted QS are used. The table also shows that much of the improvements are smaller given that the DI becomes less negative the larger $s$ is. For the unsorted quantiles, the GNCQR and BRW yield improvements over the QR even when setting $s=0.05$. The same cannot be said for the sorted quantiles, where $s$ has to be set to 0.01 to see any difference in performance between the sorted QR quantiles and the sorted GNCQR quantiles. Furthermore, the sorted BRW quantiles perform on par with the sorted QR. Note that from table (\ref{tab:CrossI}) we know that the GNCQR improves the least from sorting given it has the lowest crossing incidence. As such, it is reassuring that even in the sorted quantiles, the GNCQR provides gains over sorted quantiles corroborating the fit results from our Monte Carlo exercise. The table also shows that the FLQR yields no improvements in quantiles score compared to QR regardless of what $s$ is set and whether we sort the quantiles. In essence, these findings add to \citet{carriero2025specification}, namely that fused shrinkage is also important for growth-at-risk. Importantly, our results indicate that perhaps just as important is how this interquantile shrinkage is induced: adaptive non-crossing constraints yield improvements, while the traditional fused LASSO seem to lead to worse forecasting performance.

While the diffusion index of the QS gives a broad view of how the estimator performs, it doesn't inform us about what part of the distribution the estimator does better (or worse). To this end we also construct quantile weighted CRPS (qwCRPS) scores of \citet{gneiting2011comparing} for all the estimators for the different forecast horizons. To calculate these measures, we take the QS and apply quantile specific weights:

\begin{equation}
    qwCRPS_{t+h} = \int^1_0 \; w_{\tau_q} QS_{t+h,\tau_q}d\tau_q,
\end{equation}

\noindent where $w_{\tau_q}$ denotes a weighting scheme to evaluate specific parts of the forecast density. We choose this measure as the scoring rule because through different weighting schemes we can evaluate differences at different parts of the distribution. We consider 3 different weighting schemes: $w_{\tau_q}^1=\frac{1}{Q}$ places equal weight on all quantiles (denoted as CRPS), which is equivalent to taking the average of the weighted residuals; $w_{\tau_q}^2={\tau_q}(1-{\tau_q})$ places more weight on central quantiles; $w_{\tau_q}^3=(1-{\tau_q})^2$ places more weight on the left tail; and $w_{\tau_q}^4={\tau_q}^2$ places more weight on the right tail. The results using the different weighting schemes are presented in Table (\ref{tab:ForcRes3}). 

\begin{table}[t]
\centering
\caption{Forecast results of the different estimators}
\label{tab:ForcRes3}
\begin{tabular}{lrcccc|cccc}
\textbf{} & \textit{} & \multicolumn{4}{c|}{Unsorted} & \multicolumn{4}{c}{Sorted} \\
\textbf{} & \textit{} & CRPS & Centre & Left & Right & CRPS & Centre & Left & Right \\ \hline
\multicolumn{2}{l}{\textbf{h=1}} &  &  &  &  &  &  \\
\textbf{} & \textit{QR} & 0.916 & 0.173 & 0.280 & 0.463 & 0.909 & 0.172 & 0.277 & 0.459 \\
\textbf{} & \textit{BRW} & 0.914 & 0.172 & 0.279 & 0.462 & 0.911 & 0.173 & 0.278 & 0.461 \\
\textbf{} & \textit{GNCQR} & 0.910 & 0.172 & 0.274 & 0.464 & 0.909 & 0.172 & 0.274 & 0.463 \\
\textbf{} & \textit{FLQR} & 0.918 & 0.173 & 0.279 & 0.466 & 0.914 & 0.173 & 0.277 & 0.463 \\ \hline
\multicolumn{2}{l}{\textbf{h=4}} &  &  &  &  &  &  \\
\textbf{} & \textit{QR} & 0.559 & 0.108 & 0.168 & 0.284 & 0.554 & 0.107 & 0.167 & 0.280 \\
\textbf{} & \textit{BRW} & 0.554 & 0.107 & 0.166 & 0.281 & 0.553 & 0.107 & 0.166 & 0.280 \\
\textbf{} & \textit{GNCQR} & 0.552 & 0.107 & 0.165 & 0.280 & 0.552 & 0.107 & 0.165 & 0.280 \\
\textbf{} & \textit{FLQR} & 0.560 & 0.108 & 0.167 & 0.285 & 0.557 & 0.108 & 0.166 & 0.284 \\ \hline
\end{tabular}
\end{table}

Looking at the unsorted results of Table \ref{tab:ForcRes3}, we can see that GNCQR yields the lowest weighted residual for all forecast horizons, for all weighting schemes except the right tail. These results indicate that the gains in forecast performance in GNCQR are largely from better left tail and centre forecasts. Given that the growth-at-risk framework was designed to capture downside (left tail) risk, the inability to yield better right side forecasts for the short forecast horizon is less of a concern. These results also show how BRW is a strong candidate to obtain density estimates. In particular, it is often only beaten by GNCQR. This highlights the usefulness of BRW in density estimation and for most applications should suffice as a first candidate to estimate.

Overall proposed method provides gains in forecasting. This highlights how even in small (covariate) dimensional settings, one can have better forecast performance through correctly identifying which variables are quantile varying.

\subsection{Coefficients}
While quantile regression is robust to outliers, extreme values in the covariates can still cause issues. Since the GaR is essentially a quantile autoregressive model of \cite{koenker2006quantile}, the extreme GDP observations of the COVID period enter the independent variables through the lag. This could potentially have undue influence on the coefficients and as such we will run the model on two sample periods: Pre-Covid and Full sample.

\begin{figure}[t]
    \centering
    \includegraphics[width=0.85\textwidth]{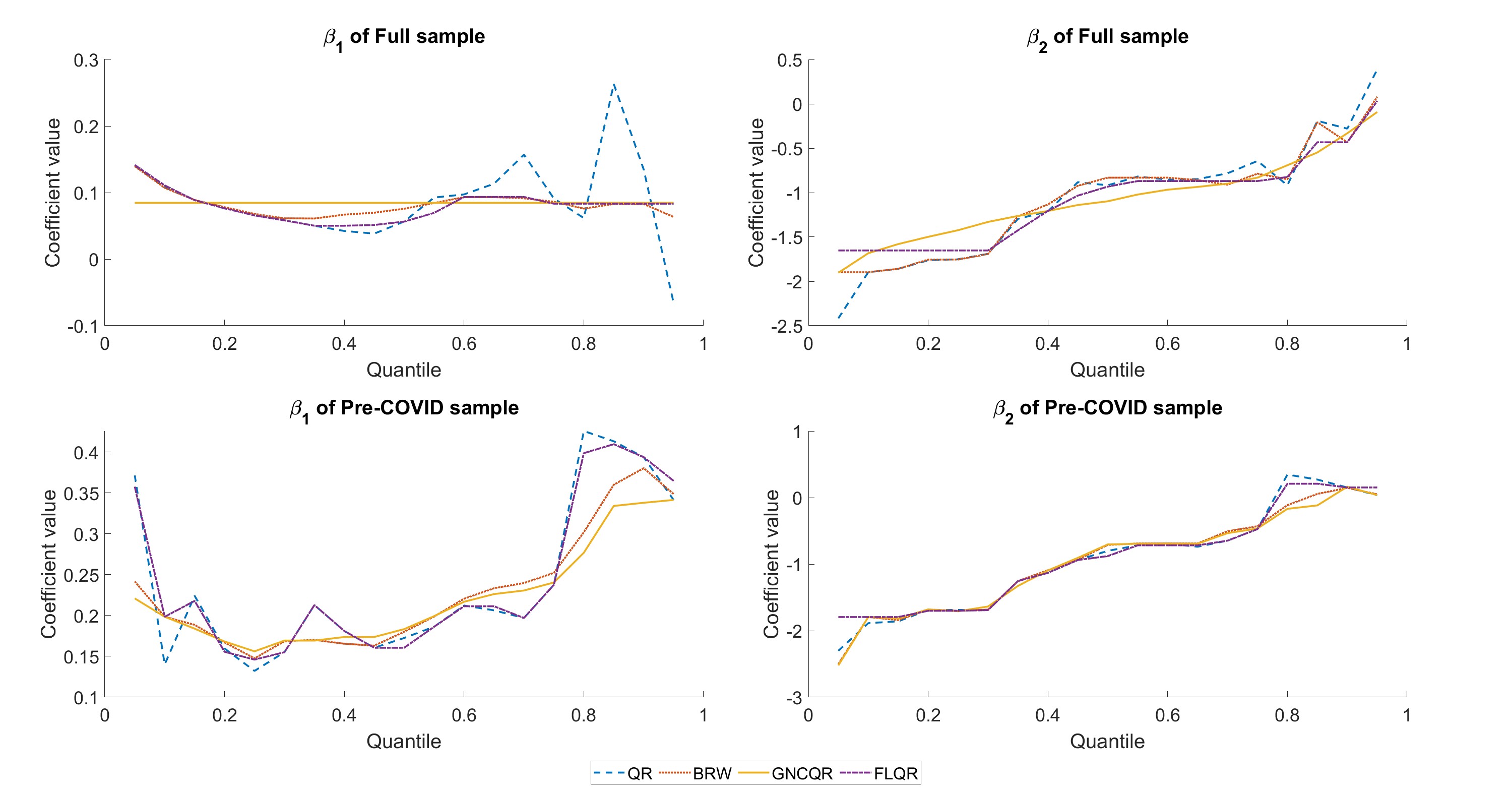}
    \caption{$\beta$ coefficients for the different estimators at h=1}
    \label{fig:Betah1}
\end{figure}

\begin{figure}[h!]
    \centering
    \includegraphics[width=0.85\textwidth]{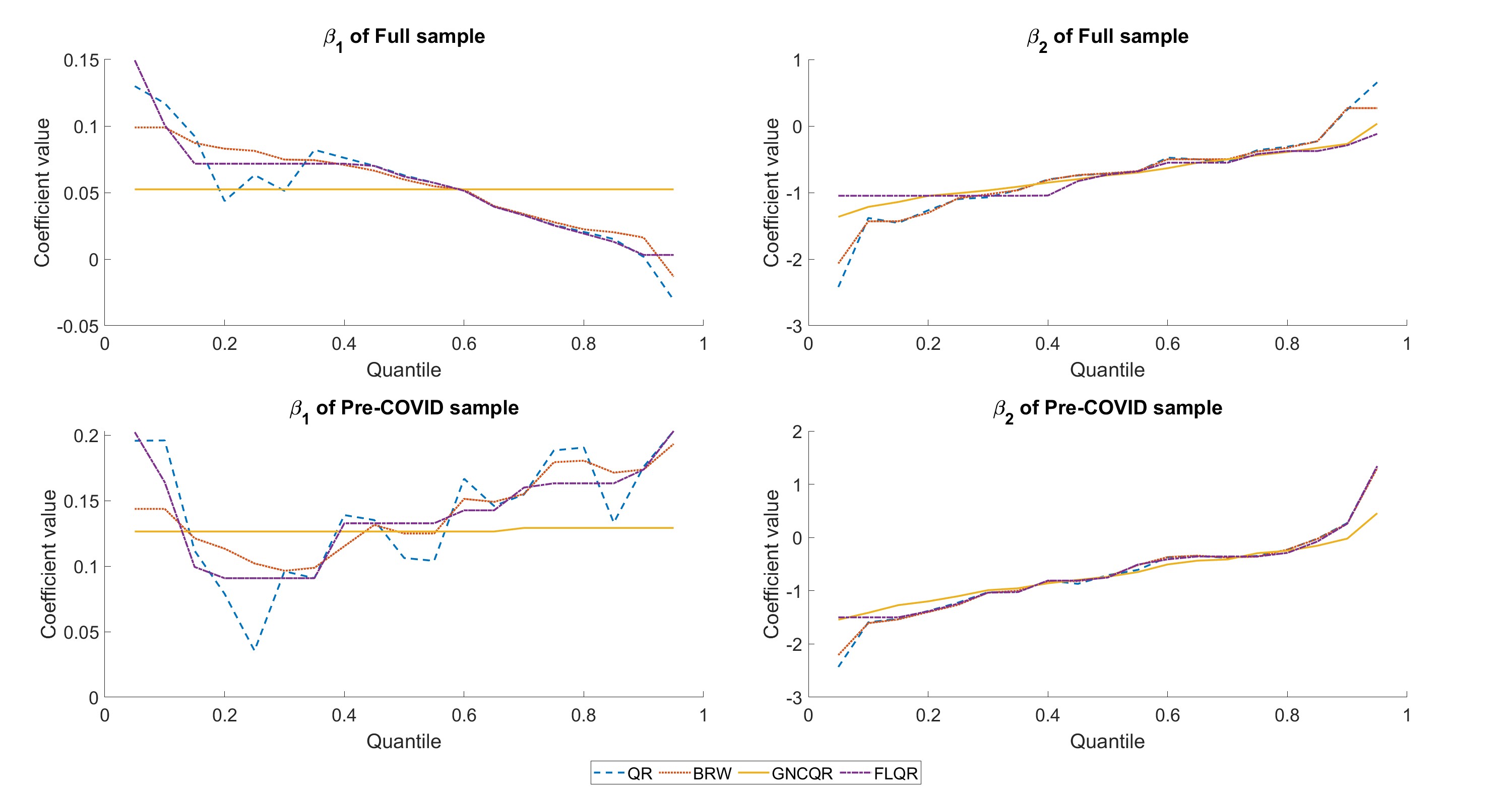}
    \caption{$\beta$ coefficients for the different estimators at h=4}
    \label{fig:Betah4}
\end{figure}

The quantile profiles of the estimated $\beta$ coefficients are presented in Figures \ref{fig:Betah1} and \ref{fig:Betah4}, for pre-COVID and full samples. $\beta_1$ is the coefficient for GDP growth and $\beta_2$ corresponds to the NFCI. A key observation is that the coefficient for the NFCI has a distinct quantile-varying profile for both samples and forecast horizons. Almost all estimators showcase a general increasing tendency across the quantiles. Nevertheless, there are some difference between the different estimators. First, FLQR is likely to shrink quantile variation at the tails, particularly the lower tail. This tendency sees FLQR have a less pronounced impact of NFCI on the lowest quantiles than the other estimators. Second, the traditional QR estimator has a positive coefficient for the upper quantiles of NFCI, while GNCQR is more likely to give a zero coefficient. Given the argument of \citet{adrian2019vulnerable}, a zero coefficient for the upper quantiles of NFCI makes more sense than positive coefficients. Note that none of the estimators have shrinkage imposed on the $\beta$, i.e. there is no explicit variable selection, yet GNCQR obtains $\hat{\beta}$'s much closer  to 0 at the upper quantile of NFCI. Third, GNCQR has less `jagged' quantile profile. In fact, looking at the NFCI coefficients quantile profile, GNCQR only has negative slope for the pre-covid sample for the $h=1$ forecast horizon. For all the other cases, GNCQR portrays a gentle upwards sloping profile.

While there is significant quantile variation for the coefficient of NFCI, the same cannot be said for the coefficient of GDP growth. The GNCQR yields virtually no quantile variation in almost all cases, with only pre-COVID of $h=1$ showing some variation. The other estimators also show less quantile variation for GDP growth than the NFCI, but they do not fully smooth out spurious quantile variation. Interestingly, FLQR also does not shrink away the quantile variation in GDP growth. 

The coefficient profile figures demonstrate that GNCQR is capable of identifying quantile variation better than simple FLQR. Furthermore, for the variables that have quantile variation, GNCQR shrinks away the ``jagged" edges resulting in smoother quantile profiles. This is likely the reason for the GNCQR's superior forecast performance: the GNCQR is able to identify quantile varying variables and shrink away any unnecessary variation in these variables profiles. 

These results verify the findings of \citet{adrian2019vulnerable}, namely that macro-financial linkages are important for downside risk of GDP growth. Notably, the GNCQR completely shrinks the quantile variation in the GDP coefficient, which signifies that, while past values of GDP are important in determining the location of the GDP distribution, most variation in the shape of the distribution is on account of changes in the financial conditions. Furthermore, we add to the finding of \citet{carriero2025specification}: beyond shrinkage, interquantile shrinkage has benefits for empirical macroeconomics. This is particularly important if the goal is to obtain the coefficient profiles, since post-processing sorting procedures are applicable to the fitted quantiles and not the coefficient profiles.

\subsection{Bias-variance trade-off}
Figures (\ref{fig:OOSvsISh1}) and (\ref{fig:OOSvsISh4}) show the in-sample and out-of-sample average quantile score for different $\alpha$ values and compares the profiles with the fused LASSO estimator with the same hyperparameter value. 

\begin{figure}[t]
     \centering
     \begin{subfigure}[b]{0.8\textwidth}
         \centering
         \includegraphics[width=\textwidth]{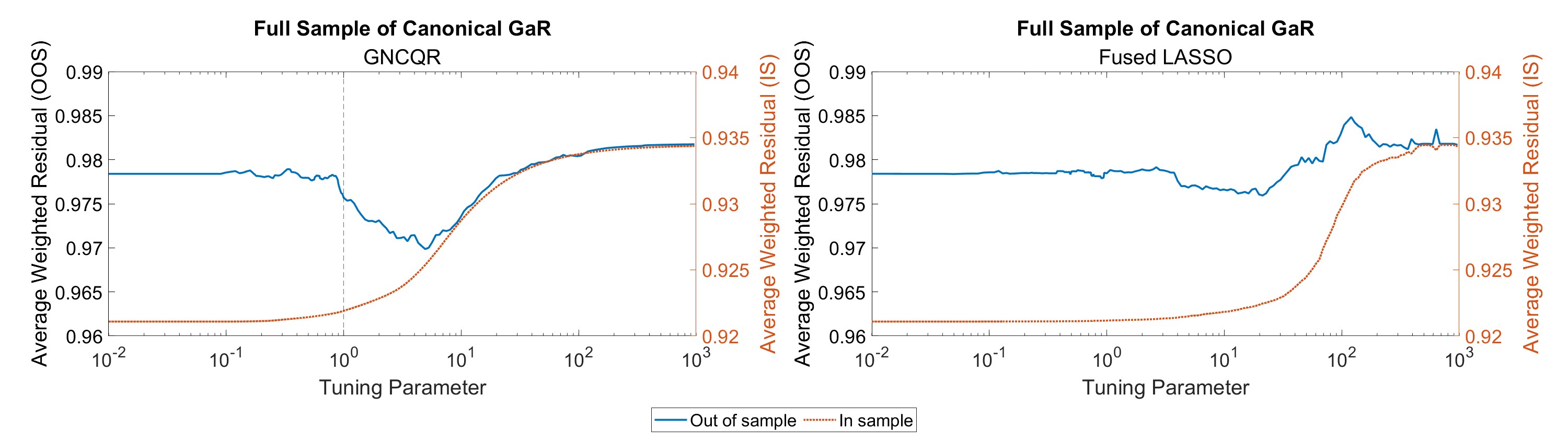}
     \end{subfigure}
     \vfill
     \begin{subfigure}[b]{0.8\textwidth}
         \centering
         \includegraphics[width=\textwidth]{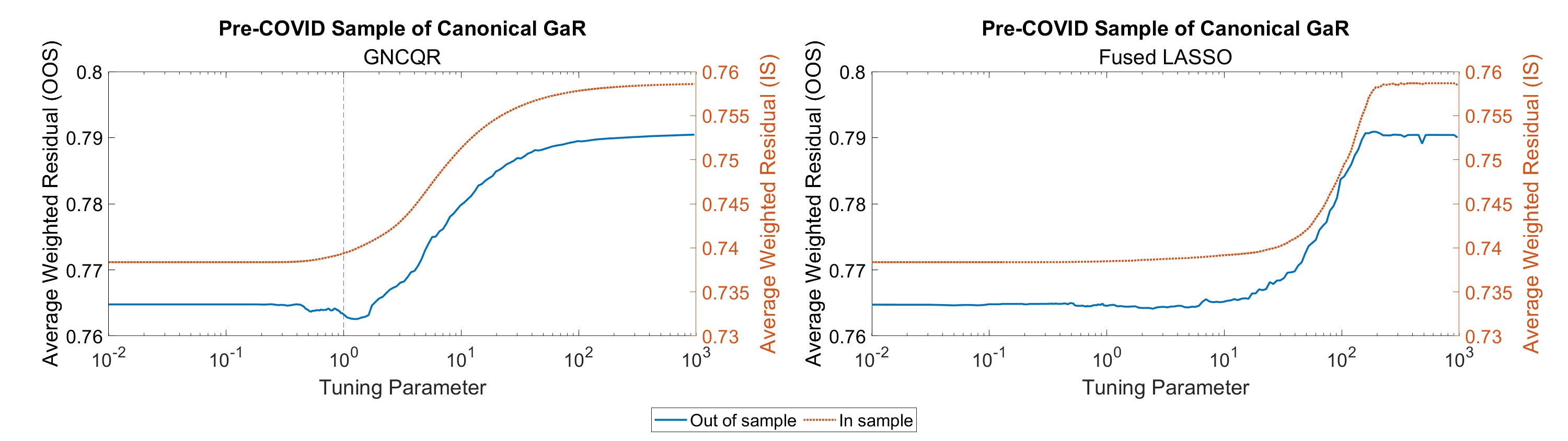}
     \end{subfigure}
     \caption{In-sample vs. Out-of-sample fit for various hyperparameter values ($h=1$)}
     \label{fig:OOSvsISh1}
\end{figure}
\begin{figure}[h!]
     \centering
     \begin{subfigure}[b]{0.8\textwidth}
         \centering
         \includegraphics[width=\textwidth]{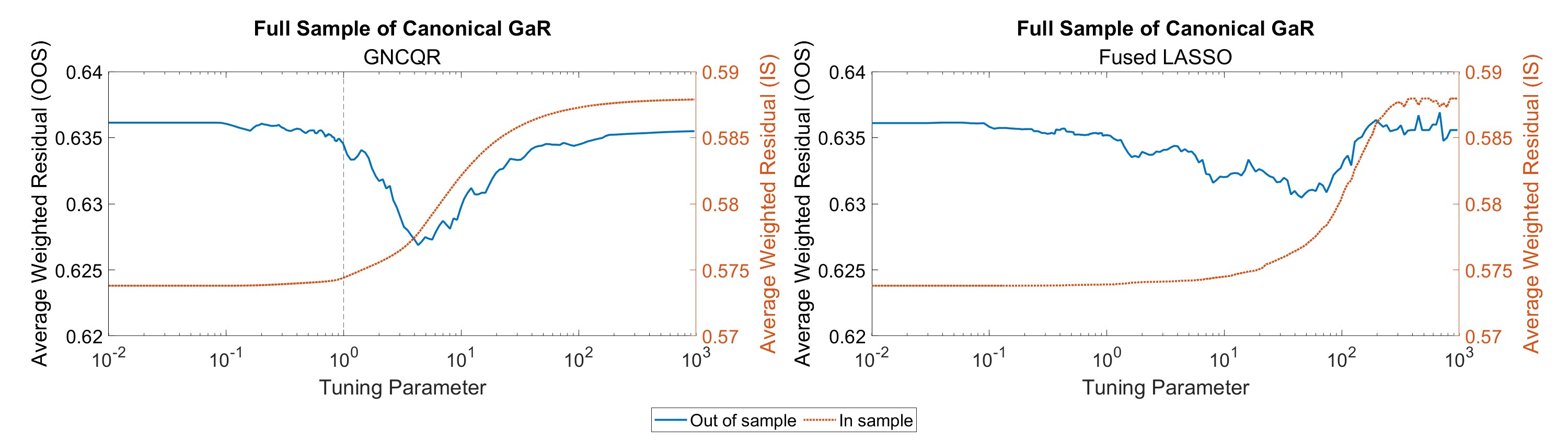}
     \end{subfigure}
     \vfill
     \begin{subfigure}[b]{0.8\textwidth}
         \centering
         \includegraphics[width=\textwidth]{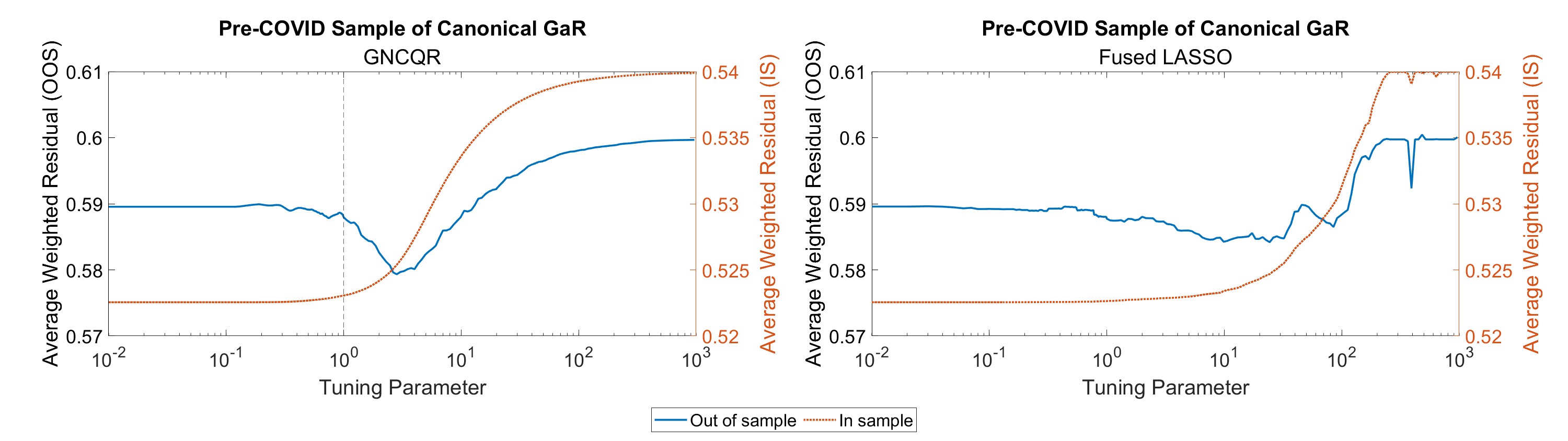}
     \end{subfigure}
     \caption{In-sample vs. Out-of-sample fit for various hyperparameter values ($h=4$)}
     \label{fig:OOSvsISh4}
\end{figure}

The figures show that the equivalent hyperparameters of the different estimators do not lead to the same values. This in turn leads to different profiles. In particular, GNCQR shows a more gradual change in in-sample fit for all horizons and subsamples. Furthermore, the out-of-sample fits of the GNCQR have a more pronounced minimum. Another feature of the GNCQR is that its in-sample and out-of-sample profiles are less "jagged" than that of the FLQR. This is true for all forecast horizons, as well as pre- and post-COVID. These are important for model selection: it is easier to optimise the tuning parameter selection when the minimum is obtain gradually. Furthermore, these properties indicate that the GNCQR shrinks quantile variation of the correct variable.

\subsection{Policy implication}
The previous section highlighted the importance of fused shrinkage for forecasting and getting coefficient profiles that have less spurious quantile variation. In this section we will highlight how these differences in the estimators could impact policy. 

\subsubsection{Expected Shortfall}
The first policy tool we will investigate is expected shortfall as outlined in \citet{adrian2019vulnerable}, which measures the total probability mass that the conditional distribution assigns to the left tail:

\begin{equation}
    SF_{t+h}=\frac{1}{\pi}\int^\pi_0\hat{F}^{-1}_{y_{t+h|\mathcal{F}_t}}(\tau|\mathcal{F}_t)d\tau
\end{equation}

To construct expected shortfall from the estimated conditional quantiles we will utilise the methodology of \citet{mitchell2022constructing}, which uses a nonparametric method to construct densities from estimated quantiles. We will construct the expected shortfall for the $5^{th}$ quantile using the nonparametric methodology. The constructed expected shortfall are in figure (\ref{fig:ES}). While the same procedure can be used to estimate Expected Longrise, we opt to only focus on Expected Shortfall, as the variables used in the GaR model are more likely to capture downside risk.

\begin{figure}[t]
     \centering
     \begin{subfigure}[b]{0.7\textwidth}
         \centering
         \includegraphics[width=\textwidth]{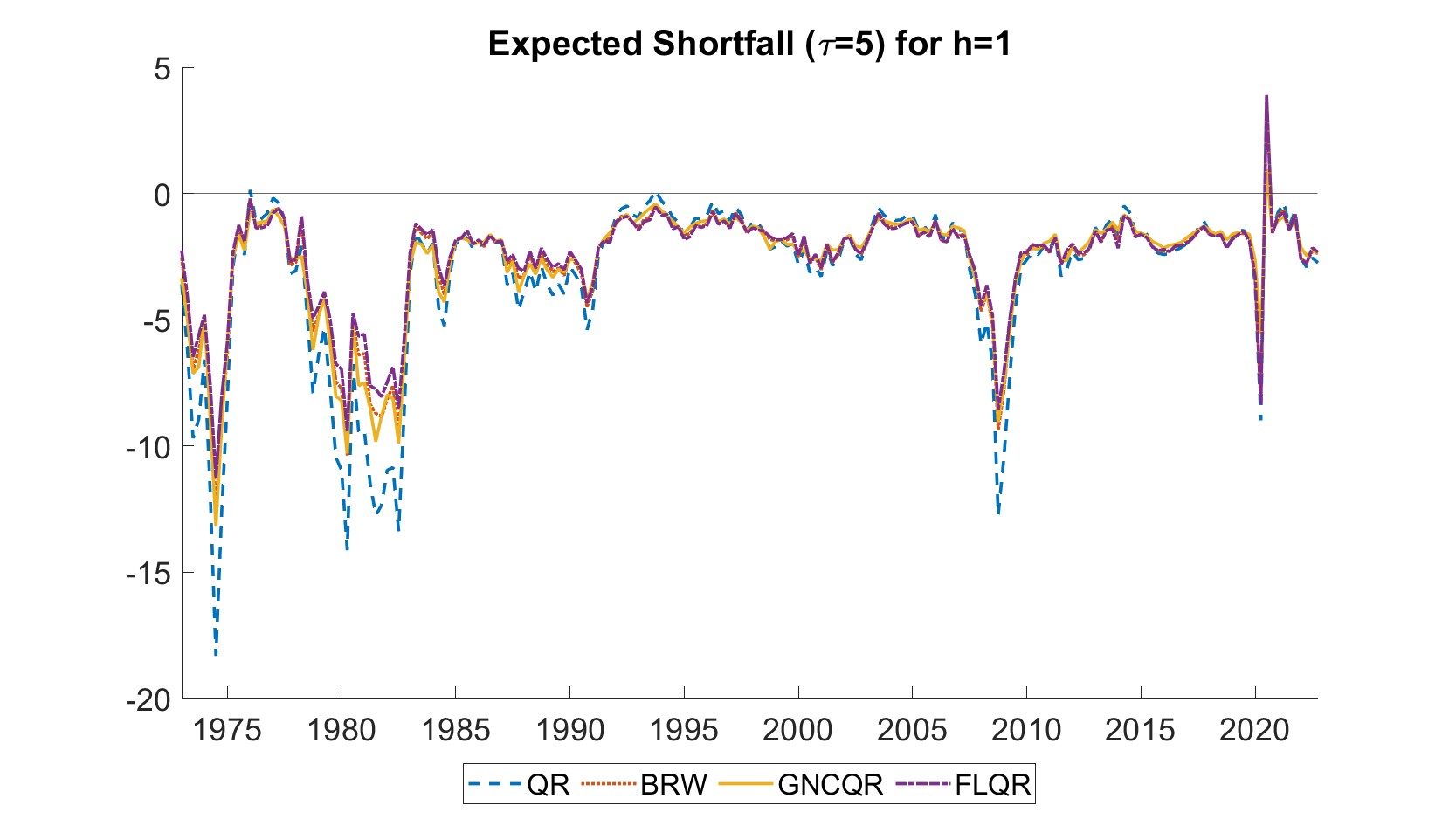}
     \end{subfigure}
     \vfill
     \begin{subfigure}[b]{0.7\textwidth}
         \centering
         \includegraphics[width=\textwidth]{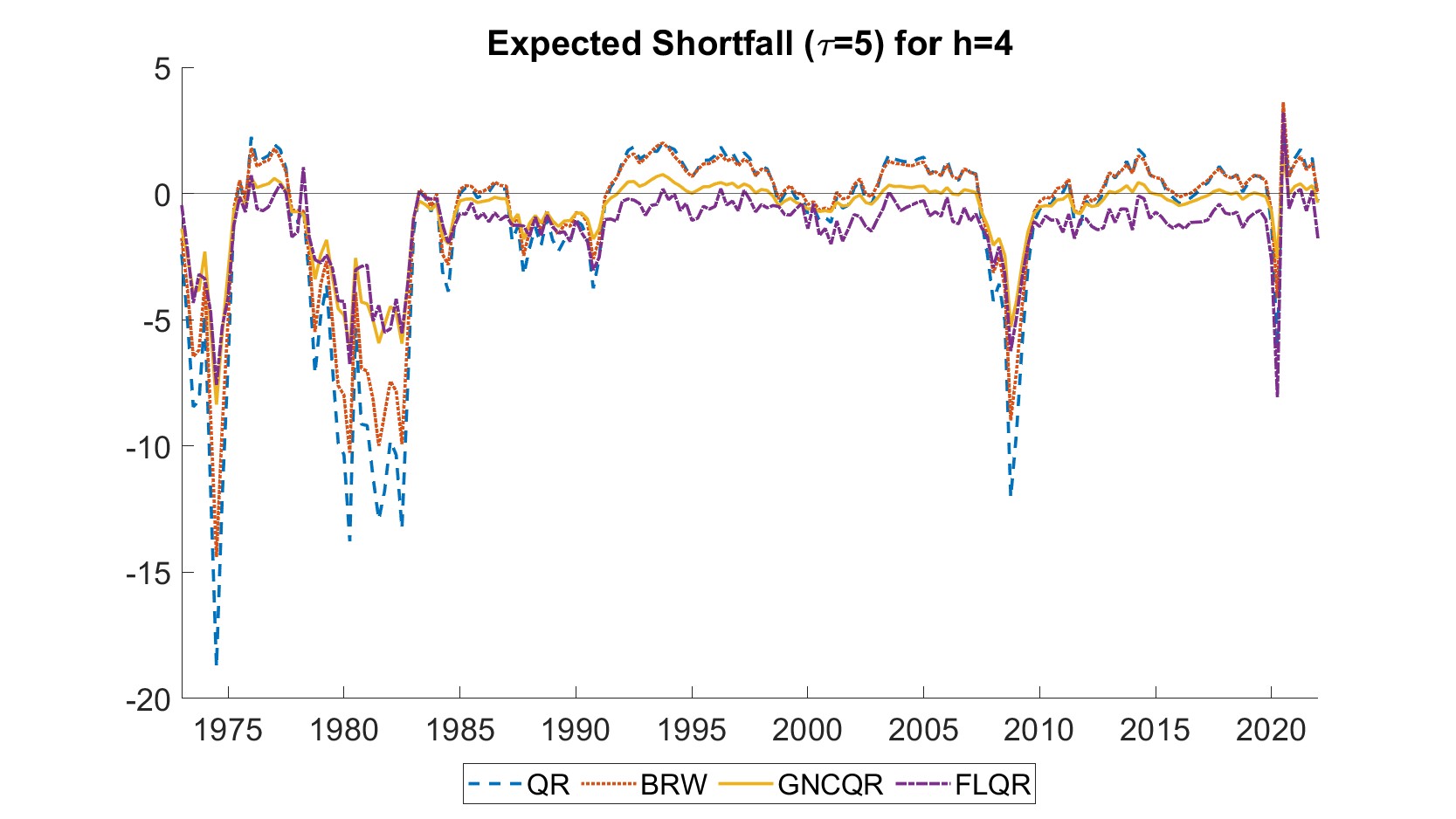}
     \end{subfigure}
     \caption{Expected Shortfall for the different estimators}
     \label{fig:ES}
\end{figure}

The figure reveals that there is a stark difference between the estimators for ES, especially at the one year horizon (h=4). These ES estimates are critical for policy-makers as they give an overview for the evolution of downside risk.

Looking at the difference between the estimators, we can see that the QR estimator produces the most volatile ES measure compared to the other estimators, especially at the start of the estimation period. Compared to QR, the BRW estimator produces smoother ES over time, while still dipping during extreme financial stress: namely the global financial crisis and the 1970s.

FLQR deviates the most from the QR and BRW, with producing more optimistic ES during financial stress episodes, but more pessimistic ES during more tranquil periods. This tendency is particularly clear for the one year horizon. Notably, FLQR shows consistently more negative ES estimates in this horizon during relatively stable market periods (e.g., 1990-2005), suggesting a potential pessimistic bias for the FLQR.

Just like with the variable selection results, the ES made from GNCQR coefficients yields a middle-ground: between the QR's ES and the FLQR's ES. In particular, the GNCQR ES is closer to the QR during financial stress episodes, yielding lower ES estimates. But during tranquil periods the method yields close to 0 ES values especially at the longer forecast horizon. Importantly, GNCQR yields smoother ES dynamics than QR. These advantages of GNCQR are likely on account of its ability to shrink away spurious quantile variation, which helps in identifying variables that drive downside risk in genuine market stress episodes.

The difference between QR and the other estimators, particularly during the early sample period, highlights the value of identifying variables that drive quantile variation. The smoother ES at longer horizons is particularly useful for policy makers. GNCQR can effectively temper the extreme volatility of traditional QR while maintaining sensitivity to genuine downside risks, potentially leading to more reliable and practical risk assessments for financial stability purposes.

\subsubsection{Tail local projection}
Another important policy tool for GaR is local projection. \citet{ruzicka2021quantile} notes that there is an identification assumption embedded in \citet{adrian2019vulnerable}, namely that NFCI has no contemporaneous effects on GDP growth distribution. This assumption has been used by \citet{ruzicka2021quantile}, \citet{wojciechowski2024structural}, and \citet{chavleishvili2024forecasting} to create quantile IRFs and quantile local projections. Importantly, these methods rely on the estimated coefficients to construct the quantile projections through time, which cannot be sorted. We will produce quantile local projections for the 4 estimators for the $5^{th}$, $10^{th}$, and $25^{th}$ quantile. We will follow \citet{ruzicka2021quantile} and deviate from \citet{adrian2019vulnerable} on two fronts: a) we will estimate future growth directly, rather than averaged over the forecast horizon; and b) we will include additional lags of GDP and NFCI. The quantile LPs of the response of GDP to a unit shock in NFCI are shown in figure (\ref{fig:LPQR}).

\begin{figure}
    \centering
    \includegraphics[width=0.6\linewidth]{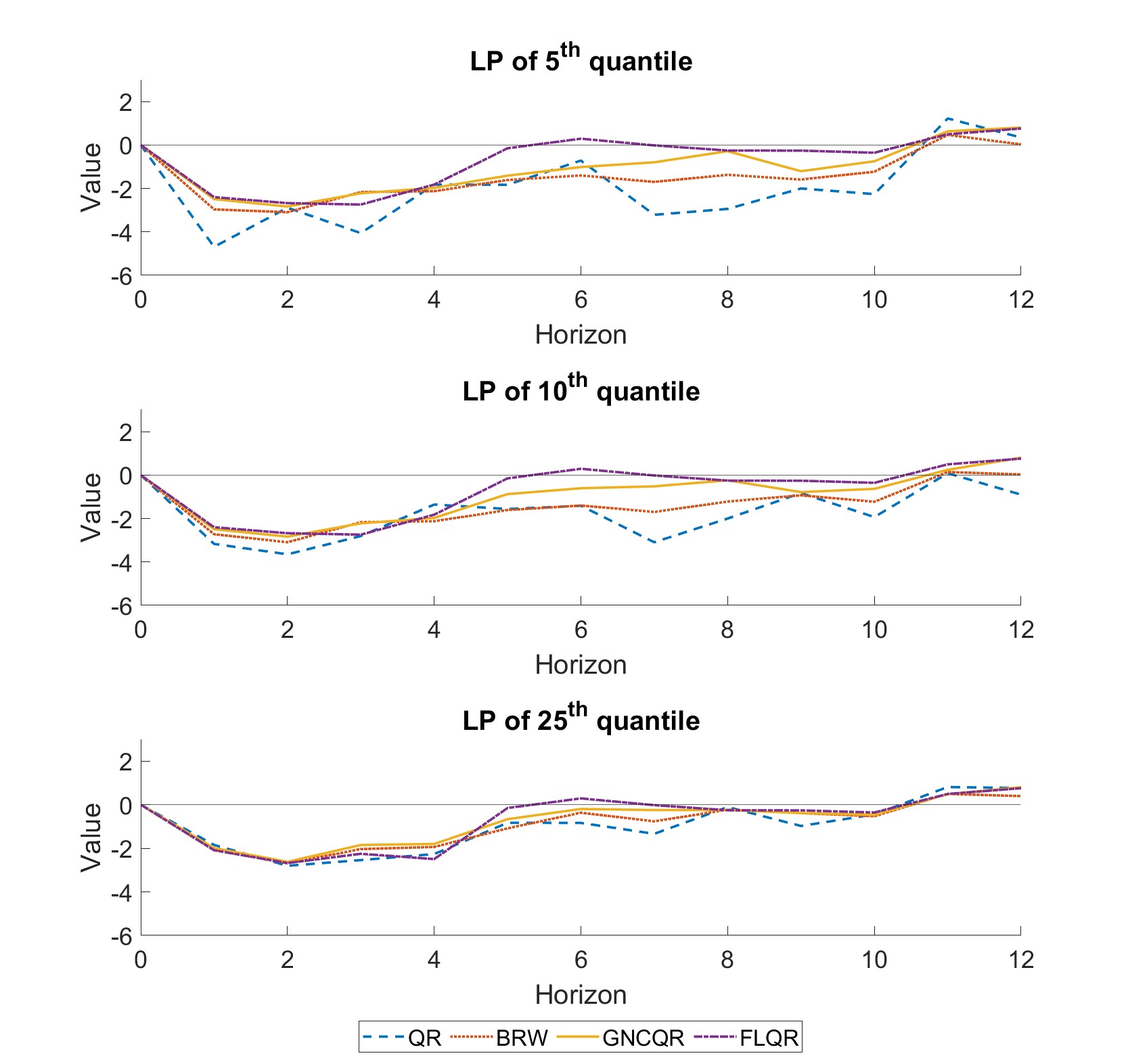}
    \caption{Local projections for the different estimators at different quantiles}
    \label{fig:LPQR}
\end{figure}

Across all three quantiles, there's a general pattern of initial decline followed by recovery, but the timing and magnitude of these movements vary considerably by estimator. The QR consistently shows the deepest and most volatile projections, while FLQR shows the quickest recovery. BRW and GNCQR typically fall between these extremes, with GNCQR often showing slightly less negative projections than BRW, particularly in the medium run (h=4-8).

In the extreme lower tail (5th quantile), we observe substantial heterogeneity across estimators. The QR estimator exhibits the highest volatility, with deeper initial drops reaching below -4 around h=1-2 and again near h=6-7. Indeed this volatility was the primary motivation for \citet{ruzicka2021quantile} to propose smoothing the projections dynamically over the horizons. In contrast, the FLQR estimator shows a remarkably different pattern, returning to near-zero values much more quickly after horizon 4, indicating it might be producing more conservative estimates of tail risk. When looking at the coefficient profiles we saw that the FLQR was prone to shrinking quantile variation at the tails. As such it is likely that the quick recovery suggested by the FLQR is a consequence of this overshrinking behaviour. The GNCQR and BRW produce smoother projections than the QR but do not return to 0 as quickly as the FLQR. This finding is particularly interesting and suggests that potentially the smoothed quantile LP framework of \citet{ruzicka2021quantile} induced fused shrinkage at each horizon. We leave for future research to explore the link between smoothing quantile variation and smoothing projections over time.

The differences between estimator choices diminish as we move up the quantiles, with the 25th quantile showing limited difference between the estimated profiles especially in the short run. After h=4, the estimators begin to diverge again, though less dramatically than in the lower quantiles.

These differences highlight the importance of estimator choice in growth-at-risk analyses, especially when focusing on extreme quantiles that are of particular interest to policymakers concerned with tail risks to economic growth. The varying persistence and magnitude of negative projections across estimators could lead to significantly different assessments of downside risks and potentially different policy recommendations.


\section{Conclusion}

This paper develops an adaptive non-crossing constraint, encompassing the traditional quantile regression estimator, the \citet{bondell2010noncrossing} non-crossing estimator, as well as the composite quantile regression, as special cases by varying tightness of the constraint. By developing non-crossing constraints that can be tightened, we study the properties of these constraints on the estimated $\beta$ parameters. Doing so reveals that non-crossing constraints are simply a type of Fused LASSO with quantile specific shrinkage parameters.

We verify that imposing non-crossing constraints is equivalent to introducing fused shrinkage with a Monte Carlo experiment. We also show how our proposed Generalised Non-Crossing Quantile Regression (GNCQR) estimator provides fits that are either better or nearly as good as those of the BRW estimator \citep{bondell2010noncrossing}. When looking at the variable selection properties of the different estimators we found that traditional Fused LASSO (FLQR) estimator overshrinks while BRW estimator undershrinks compared to GNCQR.


Taking an interquantile shrinkage lens to US Growth-at-Risk, we found that our proposed GNCQR estimator outperforms both traditional quantile regression and standard fused LASSO approaches, particularly in forecasting left-tail risks. The GNCQR effectively identifies variables that drive quantile variation while shrinking away spurious variation, resulting in smoother coefficient profiles and improved forecast performance. Notably, our results confirm the findings of \citet{adrian2019vulnerable} that financial conditions are the drivers of nonlinearities in GDP distribution, while past GDP growth acts primarily as a location shifter.

The policy implications of better GaR coefficients is also explored. Expected Shortfall estimates derived from GNCQR exhibit less volatility while remaining sensitive to genuine financial stress episodes, offering policymakers a more reliable assessment of downside risks. Similarly, quantile local projections based on GNCQR provide balanced medium-term forecasts that avoid both the excessive volatility of traditional quantile regression and the potentially over-optimistic recovery paths suggested by standard fused LASSO. These improvements in policy-relevant metrics underscore the importance of appropriate interquantile shrinkage in macroeconomic risk assessment.


In sum, the non-crossing constraints of Equation (\ref{eq:ADAconstraint}) bestow upon the quantile estimator additional attractive properties: (1) it can distinguish variables with quantile variation from those without, and shrinks the correct variable; and (2) it renders the estimated quantile profiles less `jagged', i.e., it removes sudden reversions in the difference in the $\beta$ coefficient.

The findings of this paper reach beyond the estimator proposed here. In particular, because of the equivalence between non-crossing and fused-shrinkage, one can extend the methodology to obtain non-crossing Bayesian quantile regression estimators like in \citet{lancaster2010bayesian}. To our knowledge, currently the most popular way to obtain non-crossing Bayesian quantiles involve post-processing methods such as \citet{rodrigues2017regression}. Implementing theorem (\ref{theorem:NC=FLASSO}) to the Bayesian realm has clear advantages.

The formulation of the GNCQR is very simple: the hyperparameter is simply a scalar. It is probable that one can gain further improvements in estimation by making the hyperparameter have more than just one dimension: making the hyperparameter variable specific has been shown to offer better performance in other contexts \citep{zou2006adaptive}. Naturally, doing so will require a better hyperparameter tuning procedure, since grid-search is a computation-intensive choice even with just a single parameter. We leave this extension to future research.

Another avenue of research is shrinking both the level and difference coefficients, similar to \cite{jiang2014interquantile}. However, following \citet{szendrei2023revisiting}, shrinkage in levels would require a separate dedicated hyperparameter. This way the two types of shrinkage could be incorporated yielding the possibility to identify quantile varying sparsity like in \citet{kohns2021decoupling} and \citet{szendrei2023revisiting}, while being principled about the degree of fused shrinkage one needs to impose. We leave this work for future research. 


Finally, since the bias introduced by non-crossing constraints is similar to biased bootstrap methods proposed for regression under monotone or convexity constraints, one can use degree of bias to test model adequacy. Developing tests of model adequacy using bias induced by the quantile non-crossing constraints is a promising avenue for future research.

\pagebreak

\bibliographystyle{chicago}
\bibliography{text.bbl}

\pagebreak

\appendix
\section{Estimators}

\begin{itemize}
    \item Quantile regression \citep{koenker1978regression}
    \begin{equation} \nonumber
        \hat{\beta}_{QR}=\underset{\beta}{argmin}\sum^{Q}_{q=1}\sum^{T}_{t=1}\rho_q(y_{t}-x_t^T\beta_{\tau_q})
    \end{equation}
    \item Composite quantile regression \citep{koenker1984note,zou2008composite}
    \begin{equation} \nonumber
        \hat{\beta}_{CQR}=\underset{\beta}{argmin}\sum^{Q}_{q=1}\sum^{T}_{t=1}\rho_q(y_{t}-x_t^T\beta)
    \end{equation}
    \item Non-crossing qunatile regression \citep{bondell2010noncrossing}. Here, $z \in [0,1]$, i.e. it is the $x$ rescaled.
    \begin{equation} \nonumber
    \begin{split}
        \hat{\beta}_{BRW}&=\underset{\beta}{argmin}\sum^{Q}_{q=1}\sum^{T}_{t=1}\rho_q(y_{t}-z_t^T\beta_{\tau_q})\\
        &s.t.~\gamma_{0,\tau_{p}}\geq \sum^K_{k=1} \gamma_{k,\tau_q}^-
    \end{split}
    \end{equation}
    \item Fused shrinkage quantile regression \citep{jiang2013interquantile}
    \begin{equation} \nonumber
    \begin{split}
        \hat{\beta}_{JWB}&=\underset{\beta}{argmin}\sum^{Q}_{q=1}\sum^{T}_{t=1}\rho_q(y_{t}-x_t^T\beta_{\tau_q})\\
        &s.t.~k^* \geq \sum_{q=2}^{Q}\sum^K_{k=1}(\gamma_{k,\tau_q}^++\gamma_{k,\tau_q}^-)
    \end{split}
    \end{equation}

    \item Generalised Non-Crossing Quantile Regression
    \begin{equation} \nonumber
    \begin{split}
        \hat{\beta}_{GNCQR}&=\underset{\beta}{argmin}\sum^{Q}_{q=1}\sum^{T}_{t=1}\rho_q(y_{t}-x_t^T\beta_{\tau_q})\\
        &s.t.~\gamma_{0,\tau_{p}}+\sum^K_{k=1} \Big[ \Bar{X_k} - \alpha(\Bar{X_k} - min(X_k)) \Big]\gamma_{k,\tau_q}^+\geq \sum^K_{k=1} \Big[\Bar{X_k} + \alpha(max(X_k)-\Bar{X_k}) \Big] \gamma_{k,\tau_q}^-
    \end{split}
    \end{equation}
\end{itemize}

\pagebreak

\section{Bias-variance trade-off discussion}
GNCQR allows us to gradually enforce non-crossing constraints by inducing fused shrinkage. This naturally entails that there is some type of bias-variance trade-off. In essence, as we increase $\alpha$ we increase the amount of non-crossing we want to impose. By introducing non-crossing constraints we impose a bias on the quantile property (i.e. we no longer obtain the minimum of the tick-loss weighted $\ell_1$ residuals) and instead enforce monotonically increasing quantiles.\footnote{The bias thus introduced by non-crossing constraints is similar to biased bootstrap methods (data tilting, data sharpening, etc) proposed for regression under monotone or convexity constraints. The degree of bias can be used to test model adequacy, which is a topic retained for future research.} This implicitly enforces that the data below $\tau_q$ must be a subset of the data below $\tau_{q+1}$. As we move from $\alpha=0$ to $\alpha=1$, we gradually move from the quantile property to the `quantile subset' property. Once $\alpha>1$, GNCQR starts to become more restrictive in how the quantile subset property is satisfied. In particular, it will start to penalise quantile closeness as well as quantile crossing. Formally this can be shown as follows:

\begin{equation}\label{eq:quantileclose}
    \{ I(\varepsilon_{\tau_{q-1}}-\xi\leq0) \} \subseteq \{ I(\varepsilon_{\tau_{q}}\leq 0) \}
\end{equation}

Here $\{ I(\varepsilon_{\tau_{q}}\leq0) \}$ is the elements of the quantile residual that are negative for the fitted $\tau_q^{th}$ quantile, i.e. the observations below the fitted quantile. The quantile subset property necessitates that the observations with negative residuals of the $\tau_{q-1}$ quantile must be a subset of the observations with negative residuals of the $\tau_q$ quantile. This is achieved in equation (\ref{eq:quantileclose}) when $\xi=0$.

As $\alpha$ of equation (\ref{eq:GNCQR}) goes above 1, the $\xi$ term in the quantile subset property (equation (\ref{eq:quantileclose})) increases as well. Increasing the $\xi$ parameter means that even some observations that are above the $\tau_{q-1}$ fitted quantile form the set that has to be below $\tau_q$ quantile. In this way, quantile that get too close to each other are treated as if they cross. As such, when $\alpha>1$, the $\xi$ term in equation (\ref{eq:quantileclose}) becomes positive, which in turn leads to penalising quantile closeness in addition to quantile crossing. Conversely, when $\alpha<1$, the $\xi$ term is negative and the quantile subset property allows for some degree of quantile crossing. When $\alpha=0$, the $\xi$ term is some large number, making equation (\ref{eq:quantileclose}) trivial to satisfy.

\pagebreak

\section{Hyperparameter selection methods}
The choice of cross-validation technique can lead to differences in results. In particular, there is a documented trade-off between model identification and minimising predictive risk \citep{yang2005can}. Broadly speaking, leave-one-out cross-validation is asymptotically equivalent to AIC \citep{stone1977consistent}, while the various block cross-validation methods are closer to the BIC \citep{shao1997asymptotic}, if the size of the training sample (relative to the validation sample) goes to zero as $n\rightarrow\infty$. This underlies the key point of \citet{yang2005can} that one cannot simultaneously achieve model consistency (in terms of model selection) and efficiency (in terms of achieving lowest error variance). Note also that by choosing the $hv$-block CV of \citet{racine2000consistent}, we are implicitly expressing a preference in favour of model selection. If the interest is purely data fit, it might be beneficial to opt for a leave-one-out CV. We retain consideration of the GNCQR using other CV methods to future research.

Given the results of \citet{stone1977consistent} and \citet{shao1997asymptotic}, one can also rely on information criteria for hyperparameter selection. The equation for AIC and BIC for quantile regression can be found in \citet{jiang2014interquantile}. Utilising the information criteria has the advantage of only needing one estimation (per hyperparameter), rather than one per block. This can speed up computation especially for larger sample sizes.

Cross-validation has been shown to have some drawbacks for some applications \citep{bates2024cross}. Specifically, it has been shown that cross-validation estimates the average prediction error of models fit on other unseen training sets drawn from the same population, rather than the prediction error of the model fit on the specific training set. On account of this, confidence intervals for prediction error may have lower coverage. While these results are undoubtedly important, they are less of a concern for model selection. In essence, all candidate hyperparameters are biased in the same way which has little influence on their relative rank. This entails that the optimal hyperparameter choice remains valid.\footnote{For a more rigorous mathematical treatment of the consistency of cross-validation for model selection, see Theorem 2 of \citet{yang2007consistency} and Proposition 2 of \citet{wager2020cross}.}\textsuperscript{,} \footnote{While our analysis assumes that the relative ranking of hyperparameters remains valid, further investigation might reveal contexts in which Bates' nested cross-validation method improves hyperparameter tuning.}




\end{document}